\documentclass[conference,letterpaper]{IEEEtran}

\addtolength{\topmargin}{9mm}

\usepackage[utf8]{inputenc} 
\usepackage[T1]{fontenc}
\usepackage{url}
\usepackage{ifthen}
\usepackage{cite}
\usepackage[cmex10]{amsmath}

\usepackage{graphicx}
\usepackage{dcolumn}
\usepackage{bm}
\usepackage{braket}
\usepackage{amsfonts}
\usepackage{multirow}
\usepackage{arydshln}
\usepackage{color}
\usepackage{amsthm,paralist}
\usepackage{mathtools}
\usepackage{amssymb}
\usepackage{graphicx}
\usepackage{tabularx}

\usepackage{amsfonts} 
\usepackage{mleftright}
\usepackage{blkarray}
\usepackage{enumitem}
\usepackage{nccmath}

\usepackage{times}
\usepackage{tikz}
\usepackage[hidelinks,colorlinks=true,urlcolor=blue,linkcolor=blue,citecolor=red]{hyperref}
\newcommand{\F}{\mathbb{F}}
\newtheorem{theorem}{Theorem}

\newtheorem{lemma}{Lemma}

\newcommand{\bl}[1]{{\color{blue}#1}}
\interdisplaylinepenalty=2500

\hyphenation{op-tical net-works semi-conduc-tor}

\begin{document}
\title{Universal Communication Efficient Quantum Threshold Secret Sharing Schemes} 

\author{%
  \IEEEauthorblockN{Kaushik Senthoor and Pradeep Kiran Sarvepalli}
  \IEEEauthorblockA{Department of Electrical Engineering\\
  Indian Institute of Technology Madras\\
  Chennai 600 036, India
                    }
}

\maketitle

\begin{abstract}
 Quantum secret sharing (QSS) is a cryptographic protocol in which a quantum secret is distributed among a number of parties where some subsets of the parties are able to recover the secret while some subsets are unable to recover the secret. In the standard $((k,n))$ quantum threshold secret sharing scheme, any subset of $k$ or more parties out of the total $n$ parties can recover the secret while other subsets have no information about the secret. But recovery of the secret incurs a communication cost of at least $k$ qudits for every qudit in the secret. Recently, a class of communication efficient QSS schemes were proposed which can improve this communication cost to $\frac{d}{d-k+1}$ by contacting $d\geq k$ parties where $d$ is fixed prior to the distribution of shares. In this paper, we propose a more general class of $((k,n))$ quantum secret sharing schemes with low communication complexity. Our schemes are universal in the sense that the combiner can contact any number of parties to recover the secret with communication efficiency i.e. any $d$ in the range $k\leq d\leq n$ can be chosen by the combiner. This is the first such class of universal communication efficient quantum threshold schemes. 
\end{abstract}


\section{Introduction}

\noindent
{\em Motivation.}
A quantum secret sharing protocol enables the secure distribution of  a secret among mutually collaborating parties so that only certain collections of parties can recover the
secret. 
Since the proposal of quantum secret sharing for classical secrets by Hillery {\em et al.} \cite{hillery99} and its extension to share quantum secrets by Cleve {\em et al.} \cite{cleve99} there has been extensive research in this field \cite{gottesman00,karlsson99,smith99,imai03,markham08,senthoor19,ben12,ps10}. 
Quantum secret sharing schemes provide greater security than classical secret sharing schemes \cite{hillery99}.
Quantum secret sharing has been experimentally demonstrated by many groups \cite{tittel01,wei13,hao11,bogdanski08,bell14,schmid05,gaertner07,lance04}. 
In this paper we are interested in optimizing the resources needed for quantum secret sharing. 
Specifically, we propose communication efficient quantum threshold secret sharing schemes. 

The most popular quantum secret sharing scheme is the quantum threshold secret sharing scheme (QTS). 
In this scheme a minimum of $k$ players are required to recover the secret. 
It is often denoted as a $((k,n))$  scheme indicating that $k$ or more players out of the $n$
players can recover the secret. 
Such a scheme can share one secret qudit. 
The state given to each player is called the share of the player.
After the secret has been shared the players who plan to recover the secret combine their
shares together and reconstruct the secret. 
Alternatively, the parties involved in the recovery could communicate all or part of their share to a third party designated as the combiner.
The amount of quantum communication is called the communication complexity for recovery.
The standard method due to \cite{cleve99} requires the $m n $ qudits to be shared for share distribution and 
at least $mk$ qudits for recovery.

The analogous problem of reducing communication complexity has been studied classically \cite{wang08,bitar16,bitar18,huang16,huang17,penas18} but not as much in the quantum setting.
Only recently, Ref.~\cite{senthoor19} showed that the quantum communication cost during recovery can 
be reduced by using a subset of players whose cardinality is more than the threshold required to recover the secret.
The gains can be significant and for a $((k,2k-1))$ threshold scheme, they showed that the 
gains in communication complexity of recovery per secret qudit can be as large as $O(k)$. 
One limitation of those schemes was that these gains were only for a subset of players whose size $d$ was fixed.

\smallskip
\noindent
{\em Contribution.}
In this paper, we address the problem of designing quantum threshold schemes that are universal in that any subset of size $d\geq k$ would provide gains in communication cost during recovery. 
Our schemes generalize the classical schemes of \cite{bitar16,bitar18} to the quantum setting. 
We denote them as $((k,n,*))$ schemes.
In an earlier work \cite{senthoor19}, a construction for $((k,n,d))$ communication efficient QTS has been proposed. 
However, that construction only works for a fixed value of $d$ in the range of $k<d\leq n$. 
The value of $d$ is decided prior to encoding of the secret and cannot be changed. 
When $d$ parties are contacted,  the proposed construction achieves the same communication complexity as that of fixed $d$.
So there is no loss in communication complexity with the increased flexibility to change $d$.
This is the first such class of communication efficient quantum threshold secret sharing schemes where the number of parties contacted can be varied from 
$k$ to $n$.

\smallskip
\noindent
{\em Notation.}
We define the two qudit operator $L_\alpha$ as 
\begin{eqnarray}
L_\alpha\ket{i}_c\ket{j}_t  = \ket{i}_c\ket{j+\alpha i}_t, \label{eq:Lt-defn}
\end{eqnarray}
where  $i, j\in\F_q$ and $\alpha\in\F_q$ is a constant. The subscript $c$ and $t$ indicate that they are control and target qudits respectively. 
This operator generalizes the CNOT gate. 

We take the standard basis of $\mathbb{C}^q$ to be  $\{\ket{x}\mid x\in \F_q \}$.
We denote $\ket{x_1x_2\cdots x_\ell}$ by $\ket{\underline{x}}$ where $\underline{x}$ is the vector $(x_1,x_2,\hdots,x_\ell)$.
The standard basis for $\mathbb{C}^{q^\ell}$ is taken to be $\{\ket{\underline{x}}\mid \underline{x}\in \F_q^\ell \}$.
For any invertible matrix $K\in\F_q^{\ell\times\ell}$, we define the unitary operation $U_K$ 
\begin{eqnarray}
U_K\ket{\underline{x}}  = \ket{K\underline{x}} =\ket{\underline{y}},\label{eq:UK-defn} 
\end{eqnarray}
where $\underline{y}= (y_1,\ldots, y_n)$ and $y_i = \sum_{j}K_{ij}x_j$.
As the mapping $L_K:\underline{x}\mapsto K\underline{x}$ is a bijection from $\F_q^\ell$ to $\F_q^\ell$ for any invertible matrix $K$, clearly $U_K$ is a unitary operation.

Let $A=[a_{ij}]$ be an $m\times n$ matrix from $\F_q^{m\times n}$.
Then $\ket{A}$ indicates the state $\ket{a_{11}a_{21}\hdots a_{m1}}$$\ket{a_{12}a_{22}\hdots a_{m2}}$$\hdots$\ $\ket{a_{1n}a_{2n}\hdots a_{mn}}$.
Let $K$ be an invertible $m\times m$ matrix.
Then applying $K$ to the state $\ket{A}$ is defined as transforming state $\ket{A}$ to $\ket{KA}$ by $U_K^{\otimes n}$.

Consider the matrices $B_1,B_2,\hdots,B_f$ where each of these $f$ matrices has the same $n$ number of columns.
Then, we use the notation $\ket{A(B_1,B_2,\hdots,B_f)}$ to denote $\ket{A\left[\begin{array}{cccc} B_1^t & B_2^t & \hdots & B_{f}^t\end{array} \right]^t}$ and $\ket{B_1,B_2,\hdots,B_f}$ to denote $\ket{\left[\begin{array}{cccc} B_1^t & B_2^t & \hdots & B_{f}^t\end{array} \right]^t}$.

We  use the notation $[n]:=\{1,2,\ldots, n \}$ and $[i,j]:=\{i, i+1,\ldots, j \}$.
Let $V$ be a $m\times n$ matrix  and  $A \subseteq [m]$, $B\subseteq [n]$.
We denote by $V_A$, the submatrix of 
$V$ formed by taking the rows indexed by entries in $A$.
Similarly, we can form a submatrix of $V$ by taking the columns of $V$. This is indicated as 
$V^B$. 
We can also form a submatrix $V_A^B$ which takes some rows and columns from $V$.
 
\smallskip
\noindent
{\em Illustration.}
In this section, we give an example to illustrate the gains in communication complexity for a suitably designed quantum threshold scheme. 
Later sections in this paper provide a construction for such universal communication efficient quantum secret sharing schemes.
A running example for the proposed construction is included in the paper.

Consider a secret of three qudits with each qudit of dimension 11. This secret will be encoded into 15 qudits, giving three qudits to each of the five parties. 
Every qudit is of dimension 11. 
Define matrices $V$ and $M$ as follows. 
\begin{equation*}
V=
\begin{bmatrix}
9&3&4&6&1\\2&9&3&4&6\\8&2&9&3&4\\7&8&2&9&3\\5&7&8&2&9
\end{bmatrix}
\text{and\ }
M=
\left[
\begin{tabular}{ccc}
$s_1$&0&0\\$s_2$&$r_1$&0\\$s_3$&$r_2$&$r_3$\\$r_1$&$r_3$&$r_5$\\$r_2$&$r_4$&$r_6$
\end{tabular}
\right].
\end{equation*}

Here $V$ is a Cauchy matrix.
Then the encoding for a universal communication efficient QTS scheme is given by the following mapping
\begin{eqnarray}
\ket{s_1 s_2 s_3}\mapsto\sum_{\underline{r}\in\F_{11}^6}\ket{c_{11}c_{12}c_{13}}&&\!\!\ket{c_{21}c_{22}c_{23}}\ket{c_{31}c_{32}c_{33}}
\\[-0.5cm]&&\ \ \ \ \ket{c_{41}c_{42}c_{43}}\ket{c_{51}c_{52}c_{53}}\nonumber
\label{eq:enc_qudits_3_5_4}
\end{eqnarray}
where $\underline{r}=(r_1,r_2,\ldots, r_6)$
and 
$c_{ij} $ is the $(i,j)$th entry of $C=VM$. 

When combiner requests $d=5$ parties, they send the first qudit their shares, namely $c_{i,1}$. 
When $d=4$, the combiner accesses the first two qudits of each share of the four parties contacted. 
When $d=3$, the combiner accesses all three qudits of the share of the three parties contacted.

Consider the case when $d=5$ i.e. the first qudits from all five parties are accessed. Applying the operation $U_{V^{-1}}$ on these five qudits, we obtain
\begin{eqnarray}
\ket{s_1 s_2 s_3}\sum_{{\underline{r}\in\F_{11}^6}}\ket{r_1 r_2}\ket{c_{12}c_{22}c_{32}c_{42}c_{52}}\ket{c_{13}c_{23}c_{33}c_{43}c_{53}}
\end{eqnarray}

Consider the case when $d=4$. Assume that the first four parties are accessed. The first two qudits from the four parties are accessed. Applying the operation $U_{K_1}$ on the set of four second qudits, where $K_1$ is the inverse of $V_{[4]}^{[2,5]}$
we obtain
\begin{eqnarray}
\sum_{{\underline{r}\in\F_{11}^6}}\ket{c_{11}c_{21}c_{31}c_{41}c_{51}}\ket{r_1 r_2 r_3 r_4}\ket{c_{52}}\ket{c_{13}c_{23}c_{33}c_{43}c_{53}}.\label{eq:d4-recovery-1}
\end{eqnarray}

Then, on applying the operators $L_{10}\ket{r_2}\ket{c_{11}}$, $L_5\ket{r_2}\ket{c_{21}}$, $L_7\ket{r_2}\ket{c_{31}}$ and $L_8\ket{r_2}\ket{c_{41}}$, see Eq.~\eqref{eq:d4-recovery-1}, we obtain
\begin{eqnarray*}
\sum_{{\underline{r}\in\F_{11}^6}}&&\ket{9s_1+3s_2+4s_3+6r_1}\ket{2s_1+9s_2+3s_3+4r_1}\nonumber
\\[-0.4cm]&&\,\,\,\ket{8s_1+2s_2+9s_3+3r_1}\ket{7s_1+8s_2+2s_3+9r_1}\ket{c_{51}}\nonumber
\\&&\ \ \ \,\ket{r_1 r_2 r_3 r_4} \ket{c_{52}}\ket{c_{13}c_{23}c_{33}c_{43}c_{53}}.
\end{eqnarray*}

Applying the operation $U_{K_4}$ on the set of four first qudits, where $K_4$ is the inverse of $V_{[4]}^{[4]}$,
we obtain
\begin{eqnarray*}
\ket{s_1 s_2 s_3}\sum_{{\underline{r}\in\F_{11}^6}}\ket{r_1}\ket{c_{51}}\ket{r_1 r_2 r_3 r_4}\ket{c_{52}}\ket{c_{13}c_{23}c_{33}c_{43}c_{53}}.
\end{eqnarray*}

Then, on applying suitable $L_\alpha$ operators, we obtain
\begin{eqnarray*}
&&\ket{s_1 s_2 s_3}\sum_{{\underline{r}\in\F_{11}^6}}\ket{r_1}\ket{c_{51}}\ket{r_1 c_{51} r_3 c_{52}}\ket{c_{52}}\ket{c_{13}c_{23}c_{33}c_{43}c_{53}}\nonumber
\\&&=\ket{s_1 s_2 s_3}\sum_{\underline{r}'\in\F_{11}^6}\ket{r_1}\ket{r_2'}\ket{r_1 r_2' r_3 r_4'}\ket{r_4'}\ket{c_{13}c_{23}c_{33}c_{43}c_{53}}.
\end{eqnarray*}
where $\underline{r}'=(r_1,r_2',r_3,r_4',r_5,r_6)$.

In contrast, for the standard $((3,5))$ QTS due to Cleve {\em et al.} 3 qudits need to be communicated for recovery of 1 qudit of secret. In the $((3,5,5))$ fixed $d$ communication efficient QTS scheme from \cite{senthoor19}, 5 qudits need to be communicated for recovery of 3 qudits i.e 5/3 qudits per 1 qudit of secret. But this scheme does not provide the flexibility of contacting four parties communication efficiently. The scheme provided above can solve that problem. It provides communication efficiency at $d=5$ and as well as $d=4$. However, at $d=4$, this scheme gives communication cost of 8 qudits to recover secret of 3 qudits i.e. 8/3 qudits per one qudit of secret whereas the $((3,5,4))$ fixed $d$ communication efficient QTS gives 2 qudits per one qudit of secret. Our proposed construction below can provide the same communication efficiency as the fixed $d$ communication efficient QTS schemes at both $d=4$ and $d=5$.

\section{Background}
A quantum secret sharing (QSS) scheme is a protocol to encode the secret in arbitrary quantum state and share it among $n$ parties such that certain subsets of parties, called authorized sets, can recover the secret (recoverability) and certain subsets of parties, called unauthorized sets, do not have any information on the secret (secrecy).
A QSS scheme is called perfect quantum secret sharing scheme if any subset of the $n$ parties is either an authorized set or an unauthorized set. 
We focus on the $((k,n))$ quantum threshold schemes (QTS), where there are $n$ players and any $k$ or more players can recover the secret while fewer than $k$ players have no information about the secret. 

The realization of a quantum secret sharing is specified by giving an encoding for the basis states of the secret. 
Any encoding has to satisfy the properties of recoverablity and secrecy to realize a valid QSS. 
The recoverability constraint implies that any authorized set must be able to recover the secret and the secrecy constraint implies that sets that are unauthorized cannot recover the secret. 
In this paper, in Section~\ref{ss:iii_a}, we describe the encoding for the proposed construction of $((k,n,*))$ universal communication efficient quantum threshold secret sharing scheme. 
In Section~\ref{ss:iii_b}, the proof for secret recovery is given. In Section~\ref{ss:iii_c}, we show that our construction satisfies the secrecy constraint.

\section{Universal communication efficient QTS}\label{s:iii}
\subsection{Encoding}\label{ss:iii_a}
\noindent

Communication efficient quantum secret sharing schemes for particular values of $k$ and $n=2k-1$ can be designed to work for all possible values of $d$ in the range $k$ through 
$n$ where $k \leq d \leq n$.
We introduce the following terms before discussing the scheme. For $1\leq i\leq k$,
\begin{subequations}\label{eq:ceqts-params}
\begin{eqnarray}
d_i = n&-&i+1=2k-i\\
m_i = d_i-k+1, &m& = \text{lcm}\{m_1,m_2,\hdots m_k\}\\
a_i = m/(\!&d_i&-k+1)\\
b_1=a_1,\ b_i = &a_{i}& -\ a_{i-1} \text{ for }i>1  
\end{eqnarray}
\end{subequations}
Here $m$ is the total number of secret qudits shared. 
The total number of qudits with each party is also given by $m$.
This is consistent with the fact that in a perfect secret sharing scheme the size of the share must be at least as large as the secret \cite{gottesman00,imai03}.

Now $a_i$ gives the number of qudits communicated from each accessible share when $d_i$ shares are accessed to recover the secret. 
This means that $a_id_i$ qudits are communicated to the combiner when $d_i$ players are contacted. 
Let $b_1=a_1$ and $b_i=a_i-a_{i-1}$ for $2\leq i\leq k$. Pick a prime number $q\geq 2(2k-1)$.
Consider the secret $s=(s_1, s_2, \ldots, s_m)\in\mathbb{F}_q^m$ and $\underline{r}=(r_1, r_2, \ldots, r_{m(k-1)})\in \mathbb{F}_q^{m(k-1)}$. 

Entries in $\underline{s}$ are rearranged into the matrix $S$ of size $k\times (m/k)$.\vspace{-0.25cm}
\begin{eqnarray}
S= \left[\begin{array}{cccc} 
s_1&s_{k+1}& \cdots & s_{m-k+1}\\
s_2&s_{k+2}& \cdots & s_{m-k+2}\\
\vdots&\vdots& \ddots & \vdots\\
s_k&s_{2k}& \cdots & s_{m}\\
\end{array} \right]\label{eq:secret-ceqss}
\end{eqnarray}

Entries in $\underline{r}$ are rearranged into $k$ matrices i.e. $R_1$ of size $(k-1) \times b_1$, $R_2$  of size $(k-1)\times b_2$ and so on till $R_k$ of size $(k-1)\times b_k$.
\begin{eqnarray}
R_1=
\left[\begin{array}{cccc}
r_1& r_k& \cdots & r_{(a_1-1)(k-1)+1} \\
r_2& r_{k+1}& \cdots & r_{(a_1-1)(k-1)+2}\\
\vdots& \vdots & \ddots & \vdots\\
r_{k-1}& r_{2(k-1)}& \cdots & r_{a_1(k-1)}
\end{array} \right]\nonumber
\end{eqnarray}
For $2\leq i\leq k$, $R_i$ is given by 
\begin{eqnarray}
\!\left[\!\!\begin{array}{cccc}
r_{a_{i-1}(k-1)+1}& r_{(a_{i-1}+1)(k-1)+1}& \cdots & r_{(a_i-1)(k-1)+1} \\
r_{a_{i-1}(k-1)+2}& r_{(a_{i-1}+1)(k-1)+2}& \cdots & r_{(a_i-1)(k-1)+2}\\
\vdots& \vdots & \ddots & \vdots\\
r_{(a_{i-1}+1)(k-1)}& r_{(a_{i-1}+2)(k-1)}& \cdots & r_{a_i(k-1)}
\end{array}\!\! \right].\nonumber
\end{eqnarray}
The  matrix $C$, called code matrix,  is defined as follows.
\begin{eqnarray*}
C  = V   M 
\end{eqnarray*}
where
\begin{eqnarray*}
M =
\left[
\begin{tabular}{c:c:c:c:c}
\multirow{4}{*}{$\ S\ $} & {\large \ 0\ } & \multirow{2}{*}{\large 0} & \multirow{4}{*}{$\ \ddots\ $} & \multirow{3}{*}{\large 0}\\ \cdashline{2-2}
&\multirow{3}{*}{$D_1$} & &\\ \cdashline{3-3}
& & \multirow{2}{*}{$D_2$} &\\ \cdashline{5-5}
& & & & $\ \ D_{k-1}\ \ $\\
\cdashline{1-5}
\multirow{2}{*}{$R_1$} & \multirow{2}{*}{$R_2$} & \multirow{2}{*}{$R_3$} & \multirow{2}{*}{$\hdots$} & \multirow{2}{*}{$R_k$}\\
& & & &\\
\end{tabular}
\right]
\end{eqnarray*}
and $V$ is an $n\times n$ Cauchy matrix over $\mathbb{F}_q$. 
Here, $D_i$ of size $(k-i)\times b_{i+1}$ is constructed by rearranging the entries in $i$th row of the matrix $[R_1\ R_2\hdots\ R_i]$.

The encoding for a universal communication efficient QTS is given as follows:
\begin{eqnarray}
\ket{s_1 s_2\hdots s_m}\ \mapsto\sum_{\underline{r}\in\mathbb{F}_q^{m(k-1)}}
\ \bigotimes_{u=1}^{n}\ \ket{c_{u,1} c_{u,2}\hdots c_{u,m}} \label{eq:enc_qudits_univ_d}
\end{eqnarray}
where $c_{ij}$  is the entry in $C$ from $i$th row and $j$th column.
For example, take $k=3$. This gives
\begin{eqnarray*}
n=2k-1=5, q=11\ \ \nonumber\\
d_1=5, d_2=4, d_3=3\ \ \nonumber\\
m_1=3, m_2=2, m_3=1\nonumber\\
m=\text{lcm}\{3,2,1\}=6\ \ \ \nonumber\\
a_1=2, a_2=3,a_3=6\ \ \nonumber\\
b_1=2,b_2=1, b_3=3\ \ \nonumber
\end{eqnarray*}
Let $q=11$. 
Then $C$, the coding matrix for $k=3$ is given as
\begin{eqnarray*}
\left[
\begin{tabular}{ccccc}
9&3&4&6&1\\2&9&3&4&6\\8&2&9&3&4\\7&8&2&9&3\\5&7&8&2&9
\end{tabular}
\right]
\left[
\begin{tabular}{cc:c:ccc}
$s_1$ & $s_4$ & 0 & 0 & 0 & 0\\
$s_2$ & $s_5$ & $r_1$ & 0 & 0 & 0\\
$s_3$ & $s_6$ & $r_3$ & $r_2$ & $r_4$ & $r_6$\\\hdashline
$r_1$ & $r_3$ & $r_5$ & $r_7$ & $r_9$ & $r_{11}$\\
$r_2$ & $r_4$ & $r_6$ & $r_8$ & $r_{10}$ & $r_{12}$
\end{tabular}
\right].
\end{eqnarray*}

Each entry in matrix $C$, $c_{ij}$ is a function of $\underline{s}$ and $\underline{r}$. 
However, note that the $D_i$ are functions of $\underline{r}$ alone. 

The encoding for the $((3,5,*))$ schemes is given by Eq.~\eqref{eq:enc_qudits_univ_d}. For example, the corresponding
$c_{ij}$ of the third share are given below. 
\begin{eqnarray*}
c_{31}&=&8s_1+2s_2+9s_3+3r_1+4r_2,\nonumber
\\c_{32}&=&8s_4+2s_5+9s_6+3r_3+4r_4,\nonumber
\\c_{33}&=&2r_1+9r_3+3r_5+4r_6,\nonumber
\\c_{34}&=&9r_2+3r_7+4r_8,\nonumber
\\c_{35}&=&9r_4+3r_9+4r_{10},\nonumber
\\c_{36}&=&9r_6+3r_{11}+4r_{12}.
\end{eqnarray*}

Our encoding matrix is somewhat similar to the matrix used in \cite{bitar16,bitar18}. 
However, there are some minor structural differences. 
Since we encoding quantum states in superposition, there is no need for 
generating random bits. 
Furthermore, due to the No-Cloning theorem, the total number of parties cannot exceed 
$2k-1$.

\subsection{Reconstruction of the secret}\label{ss:iii_b}
The combiner can reconstruct the secret depending upon the choice of $d$. 
Once $d=d_i$ is chosen, the combiner contacts a set of any $d_i$ parties to reconstruct the secret. 
Each of the contacted party sends $a_i=\frac{m}{d_i-k+1}$ qudits to the combiner. 
In total, the combiner has $\frac{d_im}{d_i-k+1}=a_id_i$ qudits.

With respect to the $((3,5,*))$ example in the previous section, suppose that the third party is contacted for reconstruction. 
If the party  belongs to recovery set of size $d_1=5$, then $a_1=2$ qudits are communicated to the combiner. 
Similarly, if  $d_2=4$, then $a_2=3$  and if $d_3=3$, then all the $a_3=6$ qudits are sent. 

The secret reconstruction happens in two stages. First, the basis states of the secret are reconstructed through suitable unitary operations. 
The classical secret sharing schemes stop the reconstruction at this point. 
But, the qudits containing the basis states of the secret can be entangled with the remaining qudits. 
So, in the second stage, the secret is extracted into a set of  qudits that are disentangled with the remaining qudits. 
\begin{lemma}[Secret recovery]\label{lm:recovery}
For a $((k,2k-1,*))$ scheme with the encoding  
given in Eq.~\eqref{eq:enc_qudits_univ_d}, we can recover the secret from any 
$d=2k-i$ shares where $1\leq i\leq k$ by accessing only the first $a_i=\frac{m}{d-k+1}$ qudits from each share where 
$m$ is as in Eq.~\eqref{eq:ceqts-params}.
\end{lemma}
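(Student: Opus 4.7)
The plan is to prove the lemma by constructing an explicit recovery protocol that peels information off the combiner's $d_i \times a_i$ qudit array one block column at a time, working from block $i$ back to block $1$. Write $A\subseteq [n]$ for the set of $d_i$ contacted parties, so the combiner holds $\ket{V_A M^{[a_i]}}$. The key structural inputs are (i) that block $p$ of $M^{[a_i]}$ has zeros in its top $p-1$ rows, a column of $D_{p-1}$ in rows $p,\dots,k$ (replaced by a column of $S$ when $p=1$), and a column of $R_p$ in rows $k+1,\dots,n$; and (ii) that $D_{p-1}$ is, by construction, a rearrangement of the $(p-1)$-th row of $[R_1\,R_2\,\cdots\,R_{p-1}]$.

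I would run an induction on the block index $p$, descending from $p=i$ to $p=1$, with the invariant: after blocks $i,i-1,\dots,p+1$ have been processed, the combiner has decoded $R_{p+1},\dots,R_i$ in full into accessible qudits and additionally knows rows $p,p+1,\dots,i-1$ of each $R_j$ with $j\le p$ (these rows arrive bundled inside the decoded $D_p,\dots,D_{i-1}$). For the inductive step, consider block $p$: among the $n-p+1$ non-zero rows of $M$ appearing in it, the $i-p$ entries in rows $k+p,\dots,k+i-1$ are already sitting in previously decoded combiner qudits (these are the known rows of $R_p$). I would apply a sequence of $L_\alpha$ gates using those decoded qudits as controls and the block $p$ qudits as targets, with each $\alpha$ set to the appropriate negated entry of $V_A$, to subtract the known contributions. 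What remains in each column of block $p$ has the form $\ket{V_A^{J_p}\, M^{(q)}_{J_p}}$ for $J_p=[p,k]\cup[k+i,n]$, which has size exactly $d_i$. Applying $U_{(V_A^{J_p})^{-1}}$ then decodes $M^{(q)}_{J_p}$, restoring the invariant and in particular decoding $R_p$ together with the entries of $D_{p-1}$.

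Block $p=1$ is the terminal step. With rows $1,\dots,i-1$ of $R_1$ already known from earlier blocks, the $L_\alpha$ cancellation here reduces block $1$ to $d_i$ unknowns per column, namely all $k$ entries of a column of $S$ together with the $k-i$ remaining entries of the corresponding column of $R_1$; inverting the matching $d_i\times d_i$ Vandermonde submatrix places $\ket{S}$ into a specified set of combiner qudits. A final round of $L_\alpha$ gates, in the same spirit as the worked $d=4$ illustration, uses the now-revealed random values (the decoded $R$-entries sitting in combiner qudits) to cancel residual dependencies in the other combiner and party qudits, yielding a tensor product of $\ket{S}$ with a state independent of the secret.

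The main obstacle, and where I would spend the most care, is the combinatorial algebra: verifying that at each step the index set $J_p$ is correct and that the resulting $d_i\times d_i$ submatrix $V_A^{J_p}$ is invertible. Invertibility for every square submatrix follows from the Vandermonde/MDS structure of $V$ together with $q>2k-1$, but one still has to check that the $D$-to-$R$ rearrangement lines up across blocks so that the rows of each $R_j$ learned in earlier steps are exactly the rows needed to perform the $L_\alpha$ cancellation in block $p$. Once this bookkeeping is spelled out against the definition of $D_i$ as the $i$-th row of $[R_1\,\cdots\,R_i]$ suitably rearranged, the induction carries through and the secret is recovered as claimed.
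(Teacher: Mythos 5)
Your first stage --- peel the blocks from $p=i$ down to $p=1$, using the decoded $D_{i-1},\dots,D_p$ to identify the already-known rows of $R_p$, reduce block $p$ to a $d_i\times d_i$ linear system and invert --- is the same architecture as Case~(ii) of the paper's proof (the paper absorbs the known qudits into an extended matrix $W_\ell$ rather than subtracting them with $L_\alpha$ gates, which is equivalent). But two concrete things fail as written. First, the index set: at stage $p$ you know rows $p,\dots,i-1$ of $R_p$, but rows $1,\dots,p-1$ of $R_p$ are still unknown, so the residual unknown set is $[p,k+p-1]\cup[k+i,n]$, not $[p,k]\cup[k+i,n]$; your $J_p$ has $2k-i-p+1<d_i$ elements for $p\ge 2$, contradicting your own count. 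Second, the blanket invertibility claim is false: a square submatrix of a Vandermonde matrix over $\F_q$ with arbitrary rows and \emph{non-consecutive} columns is a generalized Vandermonde matrix and need not be invertible; only the consecutive-column blocks $V_A^{[j,n]}$ come for free from the MDS structure. Concretely, in the paper's own $k=3$, $q=7$ example with $i=2$ and contacted set $A=\{2,3,4,5\}$, the (corrected) $J_1=\{1,2,3,5\}$ gives $\det V_A^{J_1}=\pm(2+3+4+5)\prod_{a<b}(x_b-x_a)\equiv 0 \pmod 7$. (The paper's $W_1$ has the same determinant up to sign, so this step needs an additional argument in either formulation --- but it certainly does not ``follow from the Vandermonde/MDS structure.'')

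The larger gap is the disentanglement stage, which is where the quantum content of the lemma lives and which occupies most of the paper's proof; you dispatch it in one sentence with a mechanism that cannot work as described. The combiner holds only the first $a_i$ qudits of the $d_i$ contacted shares, so it cannot ``cancel residual dependencies in the other \dots party qudits'': the registers $\ket{V_E(S,R_1)},\ket{V_E(0,D_j,R_{j+1})},\dots$ and the trailing columns are physically held by parties it never contacted, and no operation anywhere can remove the $S$-dependence of the state $\ket{V_E(S,R_1)}$ itself. What the worked $d=4$ illustration actually does is the opposite of cancellation: it applies maps ($G_\ell$ in the paper) to the combiner's \emph{decoded random registers} so as to re-create copies of the values held externally, and then argues that for each fixed value of the secret and of the retained randomness, the map from the still-free randomness (e.g.\ $R_{i,[i-1]}$) to those external values is a bijection, so the summation collapses into a product of pairs $\sum_{T}\ket{T}\ket{T}$ independent of $S$. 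Without stating and proving this factorization you have only recovered basis states; the lemma requires recovering arbitrary superpositions, and that is exactly what this step buys. You should spell it out rather than defer to the illustration.
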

\begin{proof} 
Each of the $d$ participants sends their first $a_i$ qudits to the combiner for reconstructing the secret. 
Let $D = \{j_1, j_2, \hdots, j_d\} \subseteq \{1,2,\hdots,2k-1\}$ be the set of $d$ shares chosen and $E=\{j_{d+1},j_{d+2},\hdots,j_{2k-1}\}$ be the complement of $D$. 
Then, Eq.~\eqref{eq:enc_qudits_univ_d} can be rearranged as
\begin{eqnarray}
\sum_{\underline{r}\in\mathbb{F}_q^{m(k-1)}}
&&\textcolor{blue}{\ket{c_{j_1,1}c_{j_2,1}...c_{j_d,1}}
\ket{c_{j_1,2}c_{j_2,2}...c_{j_d,2}}}\nonumber
\\[-0.2in]&&\textcolor{blue}{\ \ \ \ \ \ \hdots\ket{c_{j_1,a}c_{j_2,a}...c_{j_d,a}}}\nonumber
\\&&\ \ \ket{c_{j_{d+1},1}c_{j_{d+2},1}...c_{j_n,1}}
\ket{c_{j_{d+1},2}c_{j_{d+2},2}...c_{j_n,2}}\nonumber
\\&&\ \ \ \ \ \ \ \ \hdots\ket{c_{j_{d+1},a}c_{j_{d+2},a}...c_{j_n,a}}\nonumber
\\&&\ \ \ \ \ket{c_{1,a+1}c_{2,a+1}...c_{n,a+1}}
\ket{c_{1,a+2}c_{2,a+2}...c_{n,a+2}}\nonumber
\\&&\ \ \ \ \ \ \ \ \ \ \ \ \hdots\ket{c_{1,m}c_{2,m}...c_{n,m}}
\label{eq:acc_qudits}
\end{eqnarray}
where we have highlighted (in blue) the basis states of the qudits communicated to the combiner.
For the sake of exposition we will first cover the case of $i=1$ i.e. $d_i=2k-1$ where all the parties are contacted for their first 
$a_1$ qudits by the combiner. 
\\\\\textit{Case (i): $i=1$}
\\For $i=1$, $d=2k-1=n$. Now Eq.~\eqref{eq:acc_qudits} can be rewritten as
\begin{eqnarray*}
\sum_{\underline{r}\in\mathbb{F}_q^{m(k-1)}}
\textcolor{blue}{\ket{V(S,R_1)}}&&\ket{V(0,D_1,R_2)}\ket{V(0,D_2,R_3)}\nonumber
\\[-0.5cm]&&\ \ \ \ \ \ \ \ \ \ \ \ \ \ \ \ \ \ \ \ \hdots\ket{V(0,D_{k-1},R_k)}
\end{eqnarray*}

Since $V$ is an $n\times n$ Cauchy matrix and therefore invertible, we can apply ${V}^{-1}$ to the state 
$\ket{V(S, R_1)}$ and rearrange the qudits to obtain 
\begin{eqnarray*}
\textcolor{blue}{\ket{S}}\sum_{\underline{r}\in\mathbb{F}_q^{m(k-1)}}
\textcolor{blue}{\ket{R_1}}&&\ket{V(0,D_1,R_2)}\ket{V(0,D_2,R_3)}\nonumber
\\[-0.5cm]&&\ \ \ \ \ \ \ \ \ \ \ \ \ \ \ \ \ \ \ \ \ \ \ \hdots\ket{V(0,D_{k-1},R_k)}.
\end{eqnarray*}
We can clearly see that the secret is unentangled with the rest of the qudits.
Therefore, we can recover arbitrary superpositions also. 

\noindent
\\\textit{Case (ii): $2\leq i\leq k$: } Under this case, the state of the system is as follows. (This is the same as
Eq.~\eqref{eq:acc_qudits}, only the qudits in possession of the combiner have been rearranged and highlighted.)
\begin{eqnarray*}
\sum_{\substack{\underline{r}\in\\\mathbb{F}_q^{m(k-1)}}}
&&\!\!\!\textcolor{blue}{\ket{V_D(S,R_1)}\ \ket{V_D(0,D_1,R_2)}\hdots\ket{V_D(0,D_{i-1},R_i)}}\nonumber
\\[-0.7cm]&&\ket{V_E(S,R_1)}\ \ket{V_E(0,D_1,R_2)}\hdots\ket{V_E(0,D_{i-1},R_i)}\nonumber
\\&&\ \ \ket{V(0,D_i,R_{i+1})}\hdots\ket{V(0,D_{k-1},R_k)}\nonumber
\end{eqnarray*}
\vspace{-0.6cm}
\begin{eqnarray*}
=\!\!\sum_{\substack{\underline{r}\in\\\mathbb{F}_q^{m(k-1)}}}
&&\textcolor{blue}{\!\!\!\!\ket{V_D(S,R_1)}\ket{{V_D}^{[2,n]}(D_1,R_2)}\hdots\ket{{V_D}^{[i,n]}(D_{i-1},R_i)}}\nonumber
\\[-0.7cm]&&\ket{V_E(S,R_1)}\ket{V_E(0,D_1,R_2)}\hdots\ket{V_E(0,D_{i-1},R_i)}\nonumber
\\&&\ \ \ \ \ \ \ \ \ket{V(0,D_i,R_{i+1})}\hdots\ket{V(0,D_{k-1},R_k)}\nonumber
\end{eqnarray*}
Since ${V_D}^{[i,n]}$ is a $d_i\times d_i$ Cauchy matrix and therefore invertible, the combiner can apply the inverse of ${V_D}^{[i,n]}$ to 
$\ket{V_D^{[i,n]}(D_{i-1}, R_{i})}$ to transform the state as follows. 
\begin{eqnarray*}
\sum_{\underline{r}\in\mathbb{F}_q^{m(k-1)}}\hspace{-0.5cm}
&&\textcolor{blue}\ {\ket{V_D(S,R_1)}\ \ket{{V_D}^{[2,n]}(D_1,R_2)}}\nonumber
\\[-0.5cm]&&\ \ \ \ \ \ \ \ \textcolor{blue}{\hdots\ket{{V_D}^{[i-1,n]}(D_{i-2},R_{i-1})}\ \ket{D_{i-1}}\ket{R_i}}\nonumber
\\&&\ \ \ \ \ket{V_E(S,R_1)}\ \ket{V_E(0,D_1,R_2)}\hdots\ket{V_E(0,D_{i-1},R_i)}\nonumber
\\&&\ \ \ \ \ \ \ \ \ket{V(0,D_i,R_{i+1})}\hdots\ket{V(0,D_{k-1},R_k)}\nonumber
\end{eqnarray*}
Note that the matrix $D_{i-1}$ contains elements from the $(i-1)$th row of $R_{i-1}$.
Rearranging the qudits, we get
\begin{eqnarray*}
\sum_{\substack{\underline{r}\in\\\mathbb{F}_q^{m(k-1)}}}
&&\!\!\!\!\textcolor{blue}{\ket{V_D(S,R_1)}\ket{{V_D}^{[2,n]}(D_1,R_2)}\hdots\ket{{V_D}^{[i-2,n]}(D_{i-3},R_{i-2})}}\nonumber
\\[-0.7cm]&&\ \ \textcolor{blue}{\ket{W_{i-1}(D_{i-2},R_{i-1})}\ \ket{D_{i-1}\backslash R_{i-1}}\ket{R_i}}\nonumber
\\&&\ \ \ \ \ket{V_E(S,R_1)}\ \ket{V_E(0,D_1,R_2)}\hdots\ket{V_E(0,D_{i-1},R_i)}\nonumber
\\&&\ \ \ \ \ \ \ket{V(0,D_i,R_{i+1})}\hdots\ket{V(0,D_{k-1},R_k)}\nonumber
\end{eqnarray*}
where
$W_\ell = [{{V_D}^{[\ell,n]}}^t\ \underline{w}_{\ell,k+1}\ \underline{w}_{\ell,k+2}\hdots\underline{w}_{\ell,k+i-\ell}]^t$ for $1\leq\ell\leq i-1$ where $\underline{w}_{\ell,j}$ is a column vector of length $(2k-\ell)$ with one in the $j$th position and zeros elsewhere. $W_\ell$ is a $(2k-\ell)\times(2k-\ell)$ full-rank matrix and invertible. 
We have split the state $\ket{D_{i-1}}$ as $\ket{D_{i-1}\setminus R_{i-1}} \ket{ D_{i-1}\cap R_{i-1}}$. Then we merge $\ket{D_{i-1}\cap R_{i-1}}$ with 
$\ket{V_D^{[i-1,n]}(D_{i-2},R_{i-1})}$ to give $\ket{W_{i-1}(D_{i-2},R_{i-1})}$.

Now applying $W_{i-1}^{-1}$ to the state $\ket{W_{i-1}(D_{i-2},R_{i-1})}$, we are able to extract $D_{i-2}$ and $R_{i-1}$ as shown below:
\begin{eqnarray*}
\sum_{\underline{r}\in\mathbb{F}_q^{m(k-1)}}\hspace{-0.5cm}
&&\textcolor{blue}{\ket{V_D(S,R_1)}\ket{{V_D}^{[2,n]}(D_1,R_2)}}\nonumber
\\[-0.5cm]&&\ \ \ \ \ \ \ \ \ \ \ \ \ \ \ \ \ \ \ \ \ \ \ \ \ \textcolor{blue}{\hdots\ket{{V_D}^{[i-2,n]}(D_{i-3},R_{i-2})}}\nonumber
\\&&\ \ \ \ \textcolor{blue}{\ket{D_{i-2}}\ket{R_{i-1}}\ \ket{D_{i-1}\backslash R_{i-1}}\ket{R_i}}\nonumber
\\&&\ \ \ \ \  \ket{V_E(S,R_1)}\ \ket{V_E(0,D_1,R_2)}\hdots\ket{V_E(0,D_{i-1},R_i)}\nonumber
\\&&\ \ \ \ \   \ket{V(0,D_i,R_{i+1})}\hdots\ket{V(0,D_{k-1},R_k)}\nonumber
\end{eqnarray*}
Now we repeat the process with $D_{i-2}$ and $R_{i-2}$ to extract $D_{i-3}$ and $R_{i-2}$. 
Rearranging the qudits, we obtain,
\begin{eqnarray*}
\sum_{\underline{r}\in\mathbb{F}_q^{m(k-1)}}\hspace{-0.5cm}
&&\textcolor{blue}{\ket{V_D(S,R_1)}\ket{{V_D}^{[2,n]}(D_1,R_2)}}\nonumber
\\[-0.5cm]&&\ \ \ \ \ \ \ \ \ \ \ \ \ \ \ \ \ \ \ \ \ \ \ \ \ \textcolor{blue}{\hdots\ket{W_{i-2}(D_{i-3},R_{i-2})}}\nonumber
\\&&\ \ \textcolor{blue}{\ket{D_{i-2}\backslash R_{i-2}}\ket{R_{i-1}}\ \ket{D_{i-1}\backslash \{R_{i-1},R_{i-2}\}}\ket{R_i}}\nonumber
\\&&\ \ \ \ket{V_E(S,R_1)}\ \ket{V_E(0,D_1,R_2)}\hdots\ket{V_E(0,D_{i-1},R_i)}\nonumber
\\&&\ \ \ \ \ket{V(0,D_i,R_{i+1})}\hdots\ket{V(0,D_{k-1},R_k)}\nonumber
\end{eqnarray*}
Repeating this process for all $D_{i-3}, R_{i-2}$ through $D_1, R_2$ and $S, R_1$, and 
applying the inverses of $W_{i-3}, W_{i-4},\hdots W_{1}$ in successive steps to the suitable sets of qudits and rearranging, we obtain,

\begin{eqnarray*}
\textcolor{blue}{\ket{S}}\!\!\sum_{\substack{\underline{r}\in\\\mathbb{F}_q^{m(k-1)}}}\hspace{-0.5cm}
&&\textcolor{blue}\ \ \bl{\ket{R_1}\ket{R_2}\ket{R_3}\hdots \ket{R_i}}\nonumber
\\[-0.7cm]&&\ \ \ \ket{V_E(S,R_1)}\ket{V_E(0,D_1,R_2)}\hdots\ket{V_E(0,D_{i-1},R_i)}\nonumber
\\&&\ \ \ \ \ket{V(0,D_i,R_{i+1})}\hdots\ket{V(0,D_{k-1},R_k)}\nonumber
\end{eqnarray*}
Let $J_\ell=[\ell-1]\cup[\ell+1,k-1]$ for $1\leq \ell\leq i-1$. Since $D_{i-1}$ is formed from the $(i-1)$th rows of $R_1, R_2, \hdots, R_{i-1}$, the qudits can be rearranged to obtain,
\begin{eqnarray*}
\textcolor{blue}{\ket{S}}\!\!\sum_{\substack{\underline{r}\in\\\mathbb{F}_q^{m(k-1)}}}
&&\bl{\ket{R_{1,J_{i-1}}}\ket{R_{2,J_{i-1}}}\hdots \ket{R_{{i-1},J_{i-1}}}\ket{D_{i-1}}\ket{R_{i}}}\nonumber
\\[-0.7cm]&&\ \ket{V_E(S,R_1)}\ket{V_E(0,D_1,R_2)}\hdots\ket{V_E(0,D_{i-1},R_i)}\nonumber
\\&&\ \ \ket{V(0,D_i,R_{i+1})}\hdots\ket{V(0,D_{k-1},R_k)}\nonumber
\end{eqnarray*}
Consider the matrix
\renewcommand{\arraystretch}{1.7}
\begin{equation}
G_\ell=
\left[
\begin{tabular}{ccc}
$I_{k-i+\ell}$ & \multicolumn{2}{c}{0}\\\hdashline
\multicolumn{3}{c}{$V_E^{[i-\ell+1,n]}$}\\\hdashline
\multicolumn{2}{c}{0} & $I_{k-i}$
\end{tabular}
\right].
\end{equation}
$G_\ell$ is a $(d_i+\ell-1)\times(d_i+\ell-1)$ invertible matrix. Applying $G_1$ on $\ket{D_{i-1}}\ket{R_{i}}$, we obtain,
\begin{eqnarray*}
\textcolor{blue}{\ket{S}}\!\!\sum_{\substack{\underline{r}\in\\\mathbb{F}_q^{m(k-1)}}}
\!\!\!\!&&\bl{\ket{R_{1,J_{i-1}}}\ket{R_{2,J_{i-1}}}\hdots \ket{R_{{i-1},J_{i-1}}}}\nonumber
\\[-0.7cm]&&\ \bl{\ket{D_{i-1}}\ket{V_E(0,D_{i-1},R_i)}\ket{R_{i,[i,k-1]}}}\nonumber
\\&&\ \ \ket{V_E(S,R_1)}\ket{V_E(0,D_1,R_2)}\hdots\ket{V_E(0,D_{i-1},R_i)}\nonumber
\\&&\ \ \ \ket{V(0,D_i,R_{i+1})}\hdots\ket{V(0,D_{k-1},R_k)}\nonumber
\end{eqnarray*}
Now, this can be rearranged to get
\begin{eqnarray*}
\textcolor{blue}{\ket{S}}\sum_{\substack{(R_1,R_2,\hdots R_{i-1},\\R_{i,[i,k-1]},\\R_{i+1}\hdots R_k)\nonumber
\\\in\mathbb{F}_q^{m(k-1)-(i-1)b_i}}}\hspace{-0.5cm}
&&\textcolor{blue}{\ket{R_1}\ket{R_2}\hdots \ket{R_{i-1}}\ \ket{R_{i,[i,k-1]}}}\nonumber
\\[-1.4cm]&&\ \ \ket{V_E(S,R_1)}\ \ket{V_E(0,D_1,R_2)}\nonumber
\\&&\ \ \ \ \ \ \ \ \ \ \ \ \ \ \ \ \ \ \ \ \ \ \hdots\ket{V_E(0,D_{i-2},R_{i-1})}\nonumber
\\&&\ \ \ \ \ \ \ket{V(0,D_i,R_{i+1})}\hdots\ket{V(0,D_{k-1},R_k)}\nonumber
\\\sum_{\substack{R_{i,[i-1]}\\\in\mathbb{F}_q^{(i-1)b_i}}}&&\ket{V_E(0,D_{i-1},R_i)}\textcolor{blue}{\ket{V_E(0,D_{i-1},R_i)}}
\end{eqnarray*}
\begin{eqnarray*}
=\textcolor{blue}{\ket{S}}\!\!\!\sum_{\substack{(R_1,R_2,\hdots R_{i-1},\\R_{i,[i,k-1]},\\R_{i+1}\hdots R_k)\nonumber
\\\in\mathbb{F}_q^{m(k-1)-(i-1)b_i}}}\hspace{-0.5cm}
&&\textcolor{blue}{\ket{R_1}\ket{R_2}\hdots \ket{R_{i-1}}\ \ket{R_{i,[i,k-1]}}}\nonumber
\\[-1.4cm]&&\ \ \ket{V_E(S,R_1)}\ \ket{V_E(0,D_1,R_2)}\nonumber
\\&&\ \ \ \ \ \ \ \ \ \ \ \ \ \ \ \ \ \ \ \ \ \ \hdots\ket{V_E(0,D_{i-2},R_{i-1})}\nonumber
\\&&\ \ \ \ \ \ \ket{V(0,D_i,R_{i+1})}\hdots\ket{V(0,D_{k-1},R_k)}\nonumber
\\&&\ \ \ \ \ \ \sum_{T_i\in\mathbb{F}_q^{(i-1)b_i}}\ket{T_i}\textcolor{blue}{\ket{T_i}}
\end{eqnarray*}
because the state
\begin{equation*}
\sum_{\substack{R_{i,[i-1]}\\\in\mathbb{F}_q^{(i-1)\times b_i}}}\ket{V_E(0,D_{i-1},R_i)}\textcolor{blue}{\ket{V_E(0,D_{i-1},R_i)}}    
\end{equation*}
is a uniform superposition of states $\ket{T_i}\ket{T_i}$ over $T_i\in\mathbb{F}_q^{(i-1)\times b_i}$ independent of the value of $D_{i-1}$ and $R_{i,[i,k-1]}$.
\\\ \\
Since $D_{i-2}$ is formed from the $(i-2)$th rows of $R_1, R_2, \hdots, R_{i-2}$, the qudits can be rearranged to obtain,
\begin{eqnarray*}
\textcolor{blue}{\ket{S}}\!\!\sum_{\substack{\underline{r}\in\\\mathbb{F}_q^{m(k-1)}}}
&&\bl{\ket{R_{1,J_{i-2}}}\ket{R_{2,J_{i-2}}}\hdots \ket{R_{{i-2},J_{i-2}}}}\nonumber
\\[-0.7cm]&&\ \ \bl{\ket{D_{i-2}}\ket{R_{i-1}}\ket{R_{i,[i,k-1]}}}\nonumber
\\&&\ \ \ \ \ \ket{V_E(S,R_1)}\ket{V_E(0,D_1,R_2)}\hdots\ket{V_E(0,D_{i-2},R_{i-1})}\nonumber
\\&&\ \ \ \ \ \ \ \ket{V(0,D_i,R_{i+1})}\hdots\ket{V(0,D_{k-1},R_k)}\nonumber
\\&&\ \ \ \ \ \ \ \sum_{T_i\in\mathbb{F}_q^{(i-1)b_i}}\ket{T_i}\bl{\ket{T_i}}
\end{eqnarray*}
Applying $G_2$ on $\ket{D_{i-2}}\ket{R_{i-1}}$, we obtain,
\begin{eqnarray*}
\textcolor{blue}{\ket{S}}\!\!\sum_{\substack{\underline{r}\in\\\mathbb{F}_q^{m(k-1)}}}
\!\!\!\!&&\bl{\ket{R_{1,J_{i-2}}}\ket{R_{2,J_{i-2}}}\hdots \ket{R_{{i-2},J_{i-2}}}}\nonumber
\\[-0.7cm]&&\ \bl{\ket{D_{i-2}}\ket{V_E(0,D_{i-2},R_{i-1})}\ket{R_{i-1,[i,k-1]}}\ket{R_{i,[i,k-1]}}}\nonumber
\\&&\ \ \ket{V_E(S,R_1)}\ket{V_E(0,D_1,R_2)}\hdots\ket{V_E(0,D_{i-2},R_{i-1})}\nonumber
\\&&\ \ \ \ket{V(0,D_i,R_{i+1})}\hdots\ket{V(0,D_{k-1},R_k)}\nonumber
\\&&\ \ \ \ \ \ \ \sum_{T_i\in\mathbb{F}_q^{(i-1)b_i}}\ket{T_i}\bl{\ket{T_i}}
\end{eqnarray*}
Now, this can be rearranged to get
\begin{eqnarray*}
\textcolor{blue}{\ket{S}}\sum_{\substack{(R_1,R_2,\hdots R_{i-1},\\R_{i,[i,k-1]},\\R_{i+1}\hdots R_k)\nonumber
\\\in\mathbb{F}_q^{m(k-1)-(i-1)b_i}}}\hspace{-0.5cm}
&&\bl{\ket{R_1}\ket{R_2}\hdots \ket{R_{i-2}}\ \ket{R_{i-1,[i,k-1]}}\ket{R_{i,[i,k-1]}}}\nonumber
\\[-1.4cm]&&\ \ \ket{V_E(S,R_1)}\ \ket{V_E(0,D_1,R_2)}\nonumber
\\&&\ \ \ \ \ \ \ \ \ \ \ \ \ \ \ \ \ \ \ \ \ \ \hdots\ket{V_E(0,D_{i-3},R_{i-2})}\nonumber
\\&&\ \ \ \ \ \ \ket{V(0,D_i,R_{i+1})}\hdots\ket{V(0,D_{k-1},R_k)}\nonumber
\\\sum_{\substack{R_{i,[i-1]}\\\in\mathbb{F}_q^{(i-1)\times b_i}}}&&\ket{V_E(0,D_{i-2},R_{i-1})}\textcolor{blue}{\ket{V_E(0,D_{i-2},R_{i-1})}}\nonumber
\\&&\ \ \ \sum_{T_i\in\mathbb{F}_q^{(i-1)b_i}}\ket{T_i}\bl{\ket{T_i}}
\end{eqnarray*}
\begin{eqnarray*}
=\textcolor{blue}{\ket{S}}\!\!\!\!\!\!\sum_{\substack{(R_1,R_2,\hdots R_{i-1},\\R_{i,[i,k-1]},\\R_{i+1}\hdots R_k)\nonumber
\\\in\mathbb{F}_q^{m(k-1)-(i-1)b_i}}}\hspace{-0.5cm}
&&\!\!\!\!\textcolor{blue}{\ket{R_1}\ket{R_2}\hdots \ket{R_{i-2}}\ket{R_{i-1,[i,k-1]}}\ket{R_{i,[i,k-1]}}}\nonumber
\\[-1.4cm]&&\ \ \ket{V_E(S,R_1)}\ \ket{V_E(0,D_1,R_2)}\nonumber
\\&&\ \ \ \ \ \ \ \ \ \ \ \ \ \ \ \ \ \ \ \ \ \ \hdots\ket{V_E(0,D_{i-3},R_{i-2})}\nonumber
\\&&\ \ \ \ \ \ \ket{V(0,D_i,R_{i+1})}\hdots\ket{V(0,D_{k-1},R_k)}\nonumber
\\&&\hspace{-1.5cm}\sum_{T_{i-1}\in\mathbb{F}_q^{(i-1)\times b_{i-1}}}\ket{T_{i-1}}\bl{\ket{T_{i-1}}}\sum_{T_i\in\mathbb{F}_q^{(i-1)\times b_i}}\ket{T_i}\bl{\ket{T_i}}
\end{eqnarray*}
Performing similar operations with $\ket{R_j}$ for $1\leq j\leq i-2$, we obtain,
\begin{eqnarray*}
\textcolor{blue}{\ket{S}}\sum_{\substack{(R_{i+1}\hdots R_k)\\\in\mathbb{F}_q^{(m-a_i)(k-1)}\\(R_{1,[i,k-1]},\hdots R_{i,[i,k-1]})\\\in\mathbb{F}_q^{(k-i)a_i}}}\hspace{-0.5cm}
&&\textcolor{blue}{\ket{R_{1,[i,k-1]}}\ket{R_{2,[i,k-1]}}\hdots \ket{R_{i,[i,k-1]}}}\nonumber
\\[-1.4cm]&&\ \ \ket{V(0,D_i,R_{i+1})}\hdots\ket{V(0,D_{k-1},R_k)}\nonumber
\\\nonumber\\\nonumber
\\&&\hspace{-2.4cm}\sum_{\substack{T_1\in\\\mathbb{F}_q^{(i-1)\times b_1}}}\ket{T_1}\textcolor{blue}{\ket{T_1}}\sum_{\substack{T_2\in\\\mathbb{F}_q^{(i-1)\times b_2}}}\ket{T_2}\textcolor{blue}{\ket{T_2}}\hdots\sum_{\substack{T_i\in\\\mathbb{F}_q^{(i-1)\times b_i}}}\ket{T_i}\textcolor{blue}{\ket{T_i}}
\end{eqnarray*}
At this point the secret is completely disentangled with the rest of the qudits and the recovery is complete.
\end{proof}

\subsection{Secrecy}\label{ss:iii_c}
In the scheme given by Eq.~\eqref{eq:enc_qudits_univ_d}, the combiner can recover the secret by accessing $k$ parties (from case (ii) when $i=k$ in the proof of Lemma 1). So, by No-cloning theorem, the remaining $k-1$ parties in the scheme should have no information about the secret. Thus, this scheme satisfies the secrecy property.
Alternatively, we can invoke \cite[Theorem~5]{imai03} to show that  the secrecy requirement is met since all unauthorized sets are complements of authorized sets
in a threshold scheme.
With these results in place we have our central contribution. 

\begin{theorem}[Existence of universal communication efficient QTS]
There exists a QTS with the parameters $((k,2k-1,*))$ such that for all values of $1\leq i\leq k$ when any $d_i=2k-i$ parties are contacted by the combiner, the secret can be recovered from $\frac{m}{d_i-k+1}$ qudits received from each of the $d_i$ shares, where the secret contains $m$ qudits as in Eq.~\eqref{eq:ceqts-params}. 
\end{theorem}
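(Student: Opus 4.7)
The plan is to prove the theorem as a direct consequence of the explicit construction from Section~\ref{ss:iii_a} together with the recovery and secrecy results that have already been set up. Concretely, I would exhibit the scheme defined by the encoding map in Eq.~\eqref{eq:enc_qudits_univ_d}, based on the code matrix $C = VM$, and then verify that it satisfies the three requirements of a universal $((k,2k-1,*))$ QTS: it is a well-defined isometry on the secret, it admits communication-efficient recovery at every $d_i$, and it leaks no information to any set of fewer than $k$ parties.

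First I would check that Eq.~\eqref{eq:enc_qudits_univ_d} defines a valid encoding of an $m$-qudit secret into $n=2k-1$ shares of $m$ qudits each. Since the random string $\underline{r}$ ranges uniformly over $\mathbb{F}_q^{m(k-1)}$ and the map $\underline{r}\mapsto C$ is affine with $V$ invertible (Vandermonde over $\mathbb{F}_q$ with $q>2k-1$), distinct secrets $\underline{s}$ produce orthogonal superpositions, so the map extends by linearity to an isometry on arbitrary superpositions of secret states. The parameters $m_i, a_i, b_i$ in Eq.~\eqref{eq:ceqts-params} are integers by the choice $m=\mathrm{lcm}\{m_1,\ldots,m_k\}$, and the dimensions of the blocks $S, D_i, R_i$ inside $M$ are consistent, so the construction is well-defined.

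Next I would invoke Lemma~\ref{lm:recovery} directly: for each choice $d=d_i=2k-i$ with $1\leq i\leq k$, the combiner recovers the secret basis states $\ket{S}$ in a form that is completely disentangled from the remaining qudits, by communicating exactly the first $a_i=m/(d_i-k+1)$ qudits from each of the $d_i$ contacted shares. Because the recovery procedure consists solely of unitaries applied to the combiner's qudits and returns a tensor product of $\ket{S}$ with a state independent of $\underline{s}$, it extends by linearity to arbitrary superpositions of secret states, which is the proper quantum recoverability criterion. This yields exactly the communication cost $m/(d_i-k+1)$ per contacted share claimed in the theorem, for every $d_i$ in the range $k\leq d_i\leq n$.

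Finally, for secrecy, I would argue as in Section~\ref{ss:iii_c}: taking $i=k$ in Lemma~\ref{lm:recovery} shows that any $k$ parties can reconstruct the secret, so by the No-Cloning theorem the complementary set of $k-1$ parties can have no information about $\underline{s}$; alternatively this follows from \cite[Theorem~5]{imai03} since in a threshold scheme every unauthorized set is the complement of an authorized one. Combining well-definedness, recoverability for all $d_i$, and secrecy establishes the existence of a $((k,2k-1,*))$ scheme with the claimed communication efficiency. The only place where real work is hidden is the recovery lemma, whose inductive unwinding of the $W_\ell$ and $G_\ell$ transformations has already been carried out; the theorem itself is essentially a packaging statement built on top of it.
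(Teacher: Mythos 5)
Your proposal is correct and mirrors the paper's own treatment: the theorem is stated there as a direct consequence of the encoding in Eq.~\eqref{eq:enc_qudits_univ_d}, the recovery argument of Lemma~\ref{lm:recovery} for every $d_i=2k-i$, and the secrecy argument via No-Cloning (or \cite[Theorem~5]{imai03}). Your additional check that the encoding is a well-defined isometry is a harmless strengthening left implicit in the paper, so there is nothing substantive to change.
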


In the standard $((k,n))$ QTS, the secret can be recovered when the combiner communicates with  $k$ parties. 
Here, if the secret is of size $m$ qudits, then the number of qudits communicated to  the combiner  is $km$ qudits.
The communication cost per secret qudit is $k$ qudits.

In the $((k,n,d))$ communication efficient QTS of \cite{senthoor19}, the secret can be recovered when the combiner contacts  $k$ parties and receiving  $km'$ qudits
where $m'=d-k+1$.
 This leads to  a cost of $k$ qudits per secret qudit.
However, when the combiner contacts $d$ parties, where $k<d\leq n$ is a fixed value, the secret can be recovered with a communication cost of $\frac{dm'}{d-k+1}$ qudits.
The cost per qudit is $\frac{d}{d-k+1}$ which is strictly less than $k$.

In the $((k,n,*))$ universal communication efficient QTS, the secret can be recovered by the combiner by accessing any $d_i$ parties, where the number of parties accessed given by $k\leq d_i\leq n$ is chosen by the combiner. For the chosen value of $d_i$, the secret can be recovered by downloading $\frac{d_i m}{d_i-k+1}$ qudits.
The per qudit communication cost is $\frac{d_i}{d_i-k+1}$ which is same as that of \cite{senthoor19}. 
However, we are able to achieve this for all $d_i$ using the same scheme and not fixing $d_i$ apriori.

\appendix
\section*{Example for $((3,5,*))$ communication efficient QTS}
\subsection{Parameters}
Take $k=3$. From Eq.~\eqref{eq:ceqts-params}, the parameters for the construction can be calculated as given below.
\begin{subequations}\label{eq:3-5-ceqts-params}
\begin{gather}
n=2k-1=5,q=11\\
d\in\{d_1,d_2,d_3\}\\
d_1=5, d_2=4, d_3=3\\
m_1=3, m_2=2, m_3=1\\
m=\text{lcm}\{m_1,m_2,m_3\}=6\\
a_1=2, a_2=3, a_3=6\\
b_1=2, b_2=1, b_3=3
\end{gather}
\end{subequations}

A secret of six qudits will be encoded into thirty qudits, giving six qudits for each party. Every qudit is of dimension 11.

\subsection{Encoding}
Encoding for this scheme can be given by the mapping
\begin{eqnarray}
\ket{s_1s_2s_3s_4s_5s_6}\ \mapsto\ \sum_{\underline{r}\in\F_{11}^{12}}
\begin{array}{l}
\ket{c_{1,1} c_{1,2} c_{1,3} c_{1,4} c_{1,5} c_{1,6}}
\\\ket{c_{2,1} c_{2,2} c_{2,3} c_{2,4} c_{2,5} c_{2,6}}
\\\ket{c_{3,1} c_{3,2} c_{3,3} c_{3,4} c_{3,5} c_{3,6}}
\\\ket{c_{4,1} c_{4,2} c_{4,3} c_{4,4} c_{4,5} c_{4,6}}
\\\ket{c_{5,1} c_{5,2} c_{5,3} c_{5,4} c_{5,5} c_{5,6}}
\end{array}
\label{eq:enc_qudits_3_5_ud_ce_qss_enc}
\end{eqnarray}
where $c_{ij}$ is the $(i,j)$th entry of the matrix  $C=VM$ and  $V$, $M$ are given below
\begin{eqnarray}
V=\left[
\begin{tabular}{ccccc}
9&3&4&6&1\\2&9&3&4&6\\8&2&9&3&4\\7&8&2&9&3\\5&7&8&2&9
\end{tabular}
\right],
\end{eqnarray}
\begin{eqnarray}
M=
\left[
\begin{tabular}{cc:c:ccc}
$s_1$ & $s_4$ & 0 & 0 & 0 & 0\\
$s_2$ & $s_5$ & $r_1$ & 0 & 0 & 0\\
$s_3$ & $s_6$ & $r_3$ & $r_2$ & $r_4$ & $r_6$\\\hdashline
$r_1$ & $r_3$ & $r_5$ & $r_7$ & $r_9$ & $r_{11}$\\
$r_2$ & $r_4$ & $r_6$ & $r_8$ & $r_{10}$ & $r_{12}$
\end{tabular}
\right].
\end{eqnarray}

Given these matrices $V$ and $M$, for this $((3,5,*))$ scheme,  
the encoded state in Eq.~\eqref{eq:enc_qudits_3_5_ud_ce_qss_enc} can be rewritten rearranging the qudits as follows. (With respect to Eq.~\eqref{eq:enc_qudits_3_5_ud_ce_qss_enc}, we have grouped the $i$th qudits of each party in the $i$th row below.)
\begin{eqnarray}
\!\!\!\sum_{\underline{r}\in\F_{11}^{12}}
\!\!\!\begin{array}{l}
\ket{c_{1,1} c_{2,1} c_{3,1} c_{4,1} c_{5,1}}
\\\ket{c_{1,2} c_{2,2} c_{3,2} c_{4,2} c_{5,2}}
\\\ket{c_{1,3} c_{2,3} c_{3,3} c_{4,3} c_{5,3}}
\\\ket{c_{1,4} c_{2,4} c_{3,4} c_{4,4} c_{5,4}}
\\\ket{c_{1,5} c_{2,5} c_{3,5} c_{4,5} c_{5,5}}
\\\ket{c_{1,6} c_{2,6} c_{3,6} c_{4,6} c_{5,6}}
\end{array}
\!\!\!=\sum_{\underline{r}\in\F_{11}^{12}}\!\!\!
\begin{array}{l}
\ket{V(s_1,s_2,s_3,r_1,r_2)}
\\\ket{V(s_4,s_5,s_6,r_3,r_4)}
\\\ket{V^{[2,5]}(r_1,r_3,r_5,r_6)}
\\\ket{V(0,0,r_2,r_7,r_8)}
\\\ket{V(0,0,r_4,r_9,r_{10})}
\\\ket{V(0,0,r_6,r_{11},r_{12})}.
\end{array}
\label{eq:enc_qudits_3_5_ud_ce_qss_enc_state}
\end{eqnarray}

For completeness, we give below the 
the encoded state in Eq.~\eqref{eq:enc_qudits_3_5_ud_ce_qss_enc}. 
\begin{eqnarray}
\sum_{\substack{(r_1,r_2,\hdots r_{12})\\\in\F_{11}^{12}}}
\begin{array}{l}
\ket{9s_1+3s_2+4s_3+6r_1+r_2}
\\\ \ket{9s_4+3s_5+4s_6+6r_3+r_4}
\\\ \ \ket{3r_1+4r_3+6r_5+r_6}
\\\ \ \ \ket{4r_2+6r_7+r_8}
\\\ \ \ \ \ket{4r_4+6r_9+r_{10}}
\\\ \ \ \ \ \ket{4r_6+6r_{11}+r_{12}}
\\\ket{2s_1+9s_2+3s_3+4r_1+6r_2}
\\\ \ket{2s_4+9s_5+3s_6+4r_3+6r_4}
\\\ \ \ket{9r_1+3r_3+4r_5+6r_6}
\\\ \ \ \ket{3r_2+4r_7+6r_8}
\\\ \ \ \ \ket{3r_4+4r_9+6r_{10}}
\\ \ \ \ \ \ \ket{3r_6+4r_{11}+6r_{12}}
\\\ket{8s_1+2s_2+9s_3+3r_1+4r_2}
\\\ \ket{8s_4+2s_5+9s_6+3r_3+4r_4}
\\\ \ \ket{2r_1+9r_3+3r_5+4r_6}
\\\ \ \ \ket{9r_2+3r_7+4r_8}
\\ \ \ \ \ \ket{9r_4+3r_9+4r_{10}}
\\\ \ \ \ \ \ket{9r_6+3r_{11}+4r_{12}}
\\\ket{7s_1+8s_2+2s_3+9r_1+3r_2}
\\\ \ket{7s_4+8s_5+2s_6+9r_3+3r_4}
\\\ \ \ket{8r_1+2r_3+9r_5+3r_6}
\\\ \ \ \ket{2r_2+9r_7+3r_8}
\\\ \ \ \ \ket{2r_4+9r_9+3r_{10}}
\\\ \ \ \ \ \ket{2r_6+9r_{11}+3r_{12}}
\\\ket{5s_1+7s_2+8s_3+2r_1+9r_2}
\\\ \ket{5s_4+7s_5+2s_6+8r_3+9r_4}
\\\ \ \ket{7r_1+2r_3+8r_5+9r_6}
\\\ \ \ \ket{2r_2+8r_7+9r_8}
\\\ \ \ \ \ket{2r_4+8r_9+9r_{10}}
\\ \ \ \ \ \ \ket{2r_6+8r_{11}+9r_{12}}
\end{array}\nonumber
\end{eqnarray}

\subsection{Secret Recovery}
For the encoding scheme given in Eq.~\eqref{eq:enc_qudits_3_5_ud_ce_qss_enc}, we can recover the secret from a subset of size $d\in\{3,4,5\}$. When $d=3$, each of the three accessed parties need to send all its six qudits. When $d=4$, each of the four accessed parties need to send only its first three qudits. When $d=5$, each of the five accessed parties need to send only its first two qudits.
We now show how to recover the secret for various sizes of the authorized set. 

\subsection*{Case 1 : $d=5$}
In this case, each of the five accessed parties sends only its first two qudits. Then the encoded state in Eq.~\eqref{eq:enc_qudits_3_5_ud_ce_qss_enc_state} can be rewritten as follows. (The basis states corresponding to the qudits accessed by the combiner are indicated in blue here.)
\begin{eqnarray*}
\sum_{\underline{r}\in\F_{11}^{12}}\!\!\!
\begin{array}{l}
\ket{\bl{c_{1,1} c_{2,1}\hdots c_{5,1}}}
\\\ket{\bl{c_{1,2} c_{2,2}\hdots c_{5,2}}}
\\\ket{c_{1,3} c_{2,3}\hdots c_{5,3}}
\\\ket{c_{1,4} c_{2,4}\hdots c_{5,4}}
\\\ket{c_{1,5} c_{2,5}\hdots c_{5,5}}
\\\ket{c_{1,6} c_{2,6}\hdots c_{5,6}}
\end{array}
=\sum_{\underline{r}\in\F_{11}^{12}}\!\!\!
\begin{array}{l}
\ket{\bl{V(s_1,s_2,s_3,r_1,r_2)}}
\\\ket{\bl{V(s_4,s_5,s_6,r_3,r_4)}}
\\\ket{V^{[2,5]}(r_1,r_3,r_5,r_6)}
\\\ket{V(0,0,r_2,r_7,r_8)}
\\\ket{V(0,0,r_4,r_9,r_{10})}
\\\ket{V(0,0,r_6,r_{11},r_{12})}
\end{array}
\end{eqnarray*}
Now, apply $V^{-1}$ to the first qudits of the five parties and then apply $V^{-1}$ to the second qudits of the five parties to obtain
\begin{eqnarray*}
\sum_{\underline{r}\in\F_{11}^{12}}
\begin{array}{l}
\ket{\bl{s_1,s_2,s_3,r_1,r_2}}
\\\ket{\bl{s_4,s_5,s_6,r_3,r_4}}
\\\ket{V^{[2,5]}(r_1,r_3,r_5,r_6)}
\\\ket{V(0,0,r_2,r_7,r_8)}
\\\ket{V(0,0,r_4,r_9,r_{10})}
\\\ket{V(0,0,r_6,r_{11},r_{12})}.
\end{array}
\end{eqnarray*}
On rearranging the qudits, we obtain the secret containing six qudits.
\begin{eqnarray*}
\ket{\bl{s_1 s_2 s_3 s_4 s_5 s_6}}
\sum_{\underline{r}\in\F_{11}^{12}}
\begin{array}{l}
\ket{\bl{r_1,r_2}}
\\\ket{\bl{r_3,r_4}}
\\\ket{V^{[2,5]}(r_1,r_3,r_5,r_6)}
\\\ket{V(0,0,r_2,r_7,r_8)}
\\\ket{V(0,0,r_4,r_9,r_{10})}
\\\ket{V(0,0,r_6,r_{11},r_{12})}
\end{array}
\end{eqnarray*}
Here, we have recovered any given basis state in the secret without any information leaking to the other qudits. Hence, the secret, which is an arbitrary superposition of the basis states, can also be recovered by the above operation. 

\subsection*{Case 2 : $d=4$}
Assume that the first four parties have been accessed by the combiner. Secret recovery for any other set of four parties will also happen in a similar way. In this case, each of the four accessed parties sends only its first three qudits. Then the encoded state can be rewritten as follows.
\begin{eqnarray*}
&&\sum_{\underline{r}\in\F_{11}^{12}}
\begin{array}{l}
\ket{\bl{c_{1,1} c_{2,1} c_{3,1} c_{4,1}}} \ket{c_{5,1}}
\\\ket{\bl{c_{1,2} c_{2,2} c_{3,2} c_{4,2}}} \ket{c_{5,2}}
\\\ket{\bl{c_{1,3} c_{2,3} c_{3,3} c_{4,3}}} \ket{c_{5,3}}
\\\ket{c_{1,4} c_{2,4} c_{3,4} c_{4,4} c_{5,4}}
\\\ket{c_{1,5} c_{2,5} c_{3,5} c_{4,5} c_{5,5}}
\\\ket{c_{1,6} c_{2,6} c_{3,6} c_{4,6} c_{5,6}}
\end{array}\nonumber
\\&&=\sum_{\underline{r}\in\F_{11}^{12}}
\begin{array}{l}
\ket{\bl{V_{[4]}(s_1,s_2,s_3,r_1,r_2)}}\ket{V_{\{5\}}(s_1,s_2,s_3,r_1,r_2)}
\\\ket{\bl{V_{[4]}(s_4,s_5,s_6,r_3,r_4)}}\ket{V_{\{5\}}(s_4,s_5,s_6,r_3,r_4)}
\\\ket{\bl{V_{[4]}^{[2,5]}(r_1,r_3,r_5,r_6)}}\ket{V_{\{5\}}^{[2,5]}(r_1,r_3,r_5,r_6)}
\\\ket{V(0,0,r_2,r_7,r_8)}
\\\ket{V(0,0,r_4,r_9,r_{10})}
\\\ket{V(0,0,r_6,r_{11},r_{12})}
\end{array}\nonumber
\end{eqnarray*}
The secret recovery happens in two parts. 
In the first part, we extract the basis state $\ket{s_1 s_2 s_3 s_4 s_5 s_6}$. 
In the second part, we disentangle the qudits containing the basis state from the remaining qudits.
\\\\1) To recover $\ket{r_1}$ and $\ket{r_3}$, apply ${V_{[4]}^{[2,5]}}^{-1}$ to $\ket{V_{[4]}^{[2,5]}(r_1,r_3,r_5,r_6)}$ to obtain
\begin{eqnarray*}
\sum_{\underline{r}\in\F_{11}^{12}}\!\!\!
\begin{array}{l}
\ket{\bl{V_{[4]}(s_1,s_2,s_3,r_1,r_2)}}\ket{V_{\{5\}}(s_1,s_2,s_3,r_1,r_2)}
\\\ket{\bl{V_{[4]}(s_4,s_5,s_6,r_3,r_4)}}\ket{V_{\{5\}}(s_4,s_5,s_6,r_3,r_4)}
\\\ket{\bl{r_1}}\ket{\bl{r_3}}\ket{\bl{r_5}}\ket{\bl{r_6}}
\ket{V_{\{5\}}^{[2,5]}(r_1,r_3,r_5,r_6)}
\\\ket{V(0,0,r_2,r_7,r_8)}
\\\ket{V(0,0,r_4,r_9,r_{10})}
\\\ket{V(0,0,r_6,r_{11},r_{12})}.
\end{array}
\end{eqnarray*}
Here, $\ket{V_{[4]}(s_1,s_2,s_3,r_1,r_2)}\ket{r_1}=\ket{W_1(s_1,s_2,s_3,r_1,r_2)}$ and $\ket{V_{[4]}(s_4,s_5,s_6,r_3,r_4)}\ket{r_3}=\ket{W_1(s_4,s_5,s_6,r_3,r_4)}$ where
\begin{equation*}
W_1=\left[
\begin{tabular}{c}
$V_{[4]}$\\\hline
0 0 0 1 0
\end{tabular}
\right].
\end{equation*}
\\\\2) To recover $\ket{s_1 s_2 s_3 s_4 s_5 s_6}$, apply $W_1^{-1}$ to the qudits $\ket{V_{[4]}(s_1,s_2,s_3,r_1,r_2)}\ket{r_1}$ and then apply $W_1^{-1}$ to the qudits $\ket{V_{[4]}(s_4,s_5,s_6,r_3,r_4)}\ket{r_3}$ to obtain
\begin{eqnarray*}
\sum_{\underline{r}\in\F_{11}^{12}}
\begin{array}{l}
\ket{\bl{s_1,s_2,s_3}}\ket{\bl{r_1}}\ket{V_{\{5\}}(s_1,s_2,s_3,r_1,r_2)}
\\\ket{\bl{s_4,s_5,s_6}}\ket{\bl{r_3}}\ket{V_{\{5\}}(s_4,s_5,s_6,r_3,r_4)}
\\\ket{\bl{r_2}}\ket{\bl{r_4}}\ket{\bl{r_5}}\ket{\bl{r_6}}\ket{V_{\{5\}}^{[2,5]}(r_1,r_3,r_5,r_6)}
\\\ket{V(0,r_2,r_7,r_8)}
\\\ket{V(0,r_4,r_9,r_{10})}
\\\ket{V(0,0,r_6,r_{11},r_{12})}.
\end{array}
\end{eqnarray*}
At this stage part of the $\ket{s_1,s_2,s_3}\ket{s_4,s_5,s_6}$ has been successfully extracted into a separate register. 
But  $\ket{s_1,s_2,s_3}$ is still entangled with $\ket{V_{\{5\}}(s_1,s_2,s_3,r_1,r_2)}\ket{r_1}\ket{r_2}$ and $\ket{s_4,s_5,s_6}$ is entangled with $\ket{V_{\{5\}}(s_4,s_5,s_6,r_3,r_4)}\ket{r_3}\ket{r_4}$. Further, $\ket{r_1}$ and $\ket{r_3}$ are entangled with $\ket{r_5}\ket{V_{\{5\}}^{[2,5]}(r_1,r_3,r_5,r_6)}$.
\\\ \\3) Consider the square matrix
\begin{equation*}
G_1=\left[
\begin{tabular}{c}
1 0 0 0\\
0 1 0 0\\\hline
$V_{\{5\}}^{[2,5]}$\\\hline
0 0 0 1
\end{tabular}
\right].
\end{equation*}
Now, apply $G_1$ to $\ket{r_1}\ket{r_3}\ket{r_5}\ket{r_6}$ to obtain,
\begin{eqnarray*}
\sum_{\underline{r}\in\F_{11}^{12}}
\begin{array}{l}
\ket{\bl{s_1,s_2,s_3}}\ket{\bl{r_1}}\ket{V_{\{5\}}(s_1,s_2,s_3,r_1,r_2)}
\\\ket{\bl{s_4,s_5,s_6}}\ket{\bl{r_3}}\ket{V_{\{5\}}(s_4,s_5,s_6,r_3,r_4)}
\\\ket{\bl{r_2}}\ket{\bl{r_4}}\ket{\bl{V_{\{5\}}^{[2,5]}(r_1,r_3,r_5,r_6)}}\ket{\bl{r_6}}\ket{V_{\{5\}}^{[2,5]}(r_1,r_3,r_5,r_6)}
\\\ket{V(0,0,r_2,r_7,r_8)}
\\\ket{V(0,0,r_4,r_9,r_{10})}
\\\ket{V(0,0,r_6,r_{11},r_{12})}
\end{array}
\end{eqnarray*}
Rearranging the qudits, we obtain
\begin{flalign}
&\sum_{\substack{(r_1,r_2\hdots r_4,\\r_6,r_7\hdots r_{12})\\\in\F_{11}^{11}}}
\begin{array}{l}
\ket{\bl{s_1,s_2,s_3}}\ket{\bl{r_1}}\ket{V_{\{5\}}(s_1,s_2,s_3,r_1,r_2)}
\\\ket{\bl{s_4,s_5,s_6}}\ket{\bl{r_3}}\ket{V_{\{5\}}(s_4,s_5,s_6,r_3,r_4)}
\\\ket{\bl{r_2}}\ket{\bl{r_4}}\ket{\bl{r_6}}
\\\ket{V(0,0,r_2,r_7,r_8)}
\\\ket{V(0,0,r_4,r_9,r_{10})}
\\\ket{V(0,0,r_6,r_{11},r_{12})}
\end{array}\label{eq:disentangle_d_4_1}
\\\!\!\!\!\!\!&\ \ \ \ \ \ \sum_{r_5\in\F_{11}}
\ket{V_{\{5\}}^{[2,5]}(r_1,r_3,r_5,r_6)}\ket{\bl{V_{\{5\}}^{[2,5]}(r_1,r_3,r_5,r_6)}}
\nonumber
\end{flalign}
For any given values of $r_1,r_3$ and $r_6$, the superposition of $\ket{V_{\{5\}}^{[2,5]}(r_1,r_3,r_5,r_6)}\ket{V_{\{5\}}^{[2,5]}(r_1,r_3,r_5,r_6)}$ over all values of $r_5\in\F_{11}$ will give the uniform superposition $\sum_{u\in\F_{11}}\ket{u}\ket{u}$, which is independent of $r_1,r_3$ and $r_6$. Hence, \eqref{eq:disentangle_d_4_1} can be simplified as
\begin{flalign*}
=&\sum_{\substack{(r_1,r_2\hdots r_4,\\r_6,r_7\hdots r_{12})\\\in\F_{11}^{11}}}
\begin{array}{l}
\ket{\bl{s_1,s_2,s_3}}\ket{\bl{r_1}}\ket{V_{\{5\}}(s_1,s_2,s_3,r_1,r_2)}
\\\ket{\bl{s_4,s_5,s_6}}\ket{\bl{r_3}}\ket{V_{\{5\}}(s_4,s_5,s_6,r_3,r_4)}
\\\ket{\bl{r_2}}\ket{\bl{r_4}}\ket{\bl{r_6}}
\\\ket{V(0,0,r_2,r_7,r_8)}
\\\ket{V(0,0,r_4,r_9,r_{10})}
\\\ket{V(0,0,r_6,r_{11},r_{12})}
\end{array}
\\&\ \ \ \ \ \ \sum_{f_5\in\F_{11}}
\ket{f_5}\ket{\bl{f_5}}
\end{flalign*}
4) Consider the square matrix
\begin{equation*}
G_2=\left[
\begin{tabular}{c}
1 0 0 0 0\\
0 1 0 0 0\\
0 0 1 0 0\\\hline
$V_{\{5\}}$\\\hline
0 0 0 0 1
\end{tabular}
\right].
\end{equation*}
Now, apply $G_2$ to $\ket{s_1,s_2,s_3}\ket{r_1}\ket{r_2}$ and then apply $G_2$ to $\ket{s_4,s_5,s_6}\ket{r_3}\ket{r_4}$ to obtain,
\begin{eqnarray*}
&\sum_{\substack{(r_1,r_2\hdots r_4,\\r_6,r_7\hdots r_{12})\\\in\F_{11}^{11}}}
\begin{array}{l}
\ket{\bl{s_1,s_2,s_3}}\ket{\bl{V_{\{5\}}(s_1,s_2,s_3,r_1,r_2)}}
\\\ \ \ \ \ \ \ \ \ \ \ \ \ \ \ \ \ \ \ \ \ \ \ \ \ \ket{V_{\{5\}}(s_1,s_2,s_3,r_1,r_2)}
\\\ket{\bl{s_4,s_5,s_6}}\ket{\bl{V_{\{5\}}(s_4,s_5,s_6,r_3,r_4)}}
\\\ \ \ \ \ \ \ \ \ \ \ \ \ \ \ \ \ \ \ \ \ \ \ \ \ \ket{V_{\{5\}}(s_4,s_5,s_6,r_3,r_4)}
\\\ket{\bl{r_2}}\ket{\bl{r_4}}\ket{\bl{r_6}}
\\\ket{V(0,0,r_2,r_7,r_8)}
\\\ket{V(0,0,r_4,r_9,r_{10})}
\\\ket{V(0,0,r_6,r_{11},r_{12})}
\end{array}\nonumber
\\&\sum_{f_5\in\F_{11}}
\ket{f_5}\ket{\bl{f_5}}
\end{eqnarray*}
Rearranging the qudits, we obtain
\begin{eqnarray}
&\hspace{-1.3cm}\ket{s_1 s_2 s_3 s_4 s_5 s_6} \sum_{\substack{(r_2,r_4,\\r_6,r_7\hdots r_{12})\\\in\F_{11}^{9}}}
\begin{array}{l}
\ket{\bl{r_2}}\ket{\bl{r_4}}\ket{\bl{r_6}}
\\\ket{V(0,0,r_2,r_7,r_8)}
\\\ket{V(0,0,r_4,r_9,r_{10})}
\\\ket{V(0,0,r_6,r_{11},r_{12})}
\end{array}\label{eq:disentangle_d_4_2}
\\&\hspace{2.7cm}\sum_{r_1\in\F_{11}}
\begin{array}{l}
\ket{V_{\{5\}}(s_1,s_2,s_3,r_1,r_2)}
\\\ket{\bl{V_{\{5\}}(s_1,s_2,s_3,r_1,r_2)}}
\end{array}\nonumber
\\&\hspace{3.3cm}\sum_{r_3\in\F_{11}}
\begin{array}{l}
\ket{V_{\{5\}}(s_4,s_5,s_6,r_3,r_4)}
\\\ket{\bl{V_{\{5\}}(s_4,s_5,s_6,r_3,r_4)}}
\end{array}\nonumber
\\&\hspace{1.8cm}\sum_{f_5\in\F_{11}}
\ket{f_5}\ket{\bl{f_5}}\nonumber
\end{eqnarray}
Similar to the argument below \eqref{eq:disentangle_d_4_1}, it can be proved that both the superposition of $\ket{V_{\{5\}}(s_1,s_2,s_3,r_1,r_2)}\ket{V_{\{5\}}(s_1,s_2,s_3,r_1,r_2)}$ over all values of $r_1$ and the superposition of $\ket{V_{\{5\}}(s_4,s_5,s_6,r_3,r_4)}\ket{V_{\{5\}}(s_4,s_5,s_6,r_3,r_4)}$ over all values of $r_3$ will give $\sum_{u\in\F_{11}}\ket{u}\ket{u}$. Hence, \eqref{eq:disentangle_d_4_2} can be simplified as
\begin{eqnarray*}
&\hspace{-1.8cm}\ket{s_1 s_2 s_3 s_4 s_5 s_6}\sum_{\substack{(r_1,r_2\hdots r_4,\\r_6,r_7\hdots r_{12})\\\in\F_{11}^{9}}}
\begin{array}{l}
\ket{\bl{r_2}}\ket{\bl{r_4}}\ket{\bl{r_6}}
\\\ket{V(0,0,r_2,r_7,r_8)}
\\\ket{V(0,0,r_4,r_9,r_{10})}
\\\ket{V(0,0,r_6,r_{11},r_{12})}
\end{array}\nonumber
\\&\hspace{1.7cm}\sum_{f_1\in\F_{11}}
\ket{f_1}\ket{\bl{f_1}}
\sum_{f_3\in\F_{11}}
\ket{f_3}\ket{\bl{f_3}}
\sum_{f_5\in\F_{11}}
\ket{f_5}\ket{\bl{f_5}}\nonumber
\end{eqnarray*}
Here, we have completely disentangled the basis states of the secret from the remaining qudits.
Hence, any arbitrary linear superposition of the basis states can be recovered by the above operations.

\subsection*{Case 3 : $d=3$}
Assume that the first three parties have been accessed by the combiner. Secret recovery for any other set of three parties will also happen in a similar way. In this case, each of the three accessed parties sends all its six qudits. Then the encoded state can be rewritten as follows.
\begin{eqnarray*}
&&\sum_{\underline{r}\in\F_{11}^{12}}
\begin{array}{l}
\ket{\bl{c_{1,1} c_{2,1} c_{3,1}}}\ket{c_{4,1} c_{5,1}}
\\\ket{\bl{c_{1,2} c_{2,2} c_{3,2}}}\ket{c_{4,2} c_{5,2}}
\\\ket{\bl{c_{1,3} c_{2,3} c_{3,3}}}\ket{c_{4,3} c_{5,3}}
\\\ket{\bl{c_{1,4} c_{2,4} c_{3,4}}}\ket{c_{4,4} c_{5,4}}
\\\ket{\bl{c_{1,5} c_{2,5} c_{3,5}}}\ket{c_{4,5} c_{5,5}}
\\\ket{\bl{c_{1,6} c_{2,6} c_{3,6}}}\ket{c_{4,6} c_{5,6}}
\end{array}\nonumber
\\&&=\sum_{\underline{r}\in\F_{11}^{12}}
\begin{array}{l}
\ket{\bl{V_{[3]}(s_1,s_2,s_3,r_1,r_2)}}\ket{V_{[4,5]}(s_1,s_2,s_3,r_1,r_2)}
\\\ket{\bl{V_{[3]}(s_4,s_5,s_6,r_3,r_4)}}\ket{V_{[4,5]}(s_4,s_5,s_6,r_3,r_4)}
\\\ket{\bl{V_{[3]}^{[2,5]}(r_1,r_3,r_5,r_6)}}\ket{V_{[4,5]}(0,r_1,r_3,r_5,r_6)}
\\\ket{\bl{V_{[3]}^{[3,5]}(r_2,r_7,r_8)}}\ket{V_{[4,5]}(0,0,r_2,r_7,r_8)}
\\\ket{\bl{V_{[3]}^{[3,5]}(r_4,r_9,r_{10})}}\ket{V_{[4,5]}(0,0,r_4,r_9,r_{10})}
\\\ket{\bl{V_{[3]}^{[3,5]}(r_6,r_{11},r_{12})}}\ket{V_{[4,5]}(0,0,r_6,r_{11},r_{12})}.
\end{array}
\end{eqnarray*}
Similar to $d=3$ case, the secret recovery happens in two parts. First, we will recover the basis state and then we will remove the entanglement with other qudits.
\\\\1) To recover $\ket{r_2}\ket{r_4}\ket{r_6}$, apply ${V_{[3]}^{[3,5]}}^{-1}$ to $\ket{V_{[3]}^{[3,5]}(r_2,r_7,r_8)}$, then to $\ket{V_{[3]}^{[3,5]}(r_4,r_9,r_{10})}$ and then to $\ket{V_{[3]}^{[3,5]}(r_6,r_{11},r_{12})}$ to obtain 
\begin{eqnarray*}
\sum_{\underline{r}\in\F_{11}^{12}}
\begin{array}{l}
\ket{\bl{V_{[3]}(s_1,s_2,s_3,r_1,r_2)}}\ket{V_{[4,5]}(s_1,s_2,s_3,r_1,r_2)}
\\\ket{\bl{V_{[3]}(s_4,s_5,s_6,r_3,r_4)}}\ket{V_{[4,5]}(s_4,s_5,s_6,r_3,r_4)}
\\\ket{\bl{V_{[3]}^{[2,5]}(r_1,r_3,r_5,r_6)}}\ket{V_{[4,5]}(0,r_1,r_3,r_5,r_6)}
\\\ket{\bl{r_2}}\ket{\bl{r_7}}\ket{\bl{r_8}}\ket{V_{[4,5]}(0,0,r_2,r_7,r_8)}
\\\ket{\bl{r_4}}\ket{\bl{r_9}}\ket{\bl{r_{10}}}\ket{V_{[4,5]}(0,0,r_4,r_9,r_{10})}
\\\ket{\bl{r_6}}\ket{\bl{r_{11}}}\ket{\bl{r_{12}}}\ket{V_{[4,5]}(0,0,r_6,r_{11},r_{12})}
\end{array}
\end{eqnarray*}
Here $\ket{V_{[3]}^{[2,5]}(r_1,r_3,r_5,r_6)}\ket{r_6}=\ket{W_2(r_1,r_3,r_5,r_6)}$ where
\begin{equation*}
W_2=\left[
\begin{tabular}{c}
$V_{[3]}^{[2,5]}$\\\hline
0 0 0 1
\end{tabular}
\right]
\end{equation*}
2) To recover $\ket{r_1}$ and $\ket{r_3}$, apply ${W_2}^{-1}$ to the qudits $\ket{V_{[3]}^{[2,5]}(r_1,r_3,r_5,r_6)}\ket{r_6}$ to obtain
\begin{eqnarray*}
\sum_{\underline{r}\in\F_{11}^{12}}
\begin{array}{l}
\ket{\bl{V_{[3]}(s_1,s_2,s_3,r_1,r_2)}}\ket{V_{[4,5]}(s_1,s_2,s_3,r_1,r_2)}
\\\ket{\bl{V_{[3]}(s_4,s_5,s_6,r_3,r_4)}}\ket{V_{[4,5]}(s_4,s_5,s_6,r_3,r_4)}
\\\ket{\bl{r_1}}\ket{\bl{r_3}}\ket{\bl{r_5}}\ket{V_{[4,5]}(0,r_1,r_3,r_5,r_6)}
\\\ket{\bl{r_2}}\ket{\bl{r_7}}\ket{\bl{r_8}}\ket{V_{[4,5]}(0,0,r_2,r_7,r_8)}
\\\ket{\bl{r_4}}\ket{\bl{r_9}}\ket{\bl{r_{10}}}\ket{V_{[4,5]}(0,0,r_4,r_9,r_{10})}
\\\ket{\bl{r_6}}\ket{\bl{r_{11}}}\ket{\bl{r_{12}}}\ket{V_{[4,5]}(0,0,r_6,r_{11},r_{12})}
\end{array}
\end{eqnarray*}
Here $\ket{V_{[3]}(s_1,s_2,s_3,r_1,r_2)}\ket{r_1}\ket{r_2}=\ket{W_2(s_1,s_2,s_3,r_1,r_2)}$ and $\ket{V_{[3]}(s_4,s_5,s_6,r_3,r_4)}\ket{r_3}\ket{r_4}=\ket{W_2(s_4,s_5,s_6,r_3,r_4)}$ where
\begin{equation*}
W_2=\left[
\begin{tabular}{c}
$V_{[3]}$\\\hline
0 0 0 1 0\\
0 0 0 0 1
\end{tabular}
\right]
\end{equation*}
3) To recover $\ket{s_1,s_2,s_3}$ and $\ket{s_4,s_5,s_6}$, apply $W_2^{-1}$ to $\ket{V_{[3]}(s_1,s_2,s_3,r_1,r_2)}\ket{r_1}\ket{r_2}$  and then to $\ket{V_{[3]}(s_4,s_5,s_6,r_3,r_4)}\ket{r_3}\ket{r_4}$ to obtain,
\begin{eqnarray*}
\sum_{\underline{r}\in\F_{11}^{12}}
\begin{array}{l}
\ket{\bl{s_1}}\ket{\bl{s_2}}\ket{\bl{s_3}}\ket{V_{[4,5]}(s_1,s_2,s_3,r_1,r_2)}
\\\ket{\bl{s_4}}\ket{\bl{s_5}}\ket{\bl{s_6}}\ket{V_{[4,5]}(s_4,s_5,s_6,r_3,r_4)}
\\\ket{\bl{r_1}}\ket{\bl{r_3}}\ket{\bl{r_5}}\ket{V_{[4,5]}(0,r_1,r_3,r_5,r_6)}
\\\ket{\bl{r_2}}\ket{\bl{r_7}}\ket{\bl{r_8}}\ket{V_{[4,5]}(0,0,r_2,r_7,r_8)}
\\\ket{\bl{r_4}}\ket{\bl{r_9}}\ket{\bl{r_{10}}}\ket{V_{[4,5]}(0,0,r_4,r_9,r_{10})}
\\\ket{\bl{r_6}}\ket{\bl{r_{11}}}\ket{\bl{r_{12}}}\ket{V_{[4,5]}(0,0,r_6,r_{11},r_{12})}
\end{array}
\end{eqnarray*}.
The basis state $\ket{s_1 s_2 s_3 s_4 s_5 s_6}$ has been successfully recovered. But still it is entangled with other qudits.
\\\ \\4) $\ket{r_2}$ is entangled with $\ket{r_7}\ket{r_8}\ket{V_{[4,5]}(0,0,r_2,r_7,r_8)}$, $\ket{r_4}$ is entangled with $\ket{r_9}\ket{r_{10}}\ket{V_{[4,5]}(0,0,r_4,r_9,r_{10})}$ and $\ket{r_6}$ is entangled with $\ket{r_{11}}\ket{r_{12}}\ket{V_{[4,5]}(0,0,r_6,r_{11},r_{12})}$. These entanglements will be removed in this step. Consider the matrix
\begin{equation*}
G_1=\left[
\begin{tabular}{c}
1 0 0\\\hline
$V_{[4,5]}^{[3,5]}$
\end{tabular}
\right]
\end{equation*}
Apply $G_1$ to $\ket{r_2}\ket{r_7}\ket{r_8}$, then to $\ket{r_4}\ket{r_9}\ket{r_{10}}$ and then to $\ket{r_6}\ket{r_{11}}\ket{r_{12}}$ to obtain,
\begin{eqnarray}
\sum_{\underline{r}\in\F_{11}^{12}}
\begin{array}{l}
\ket{\bl{s_1}}\ket{\bl{s_2}}\ket{\bl{s_3}}\ket{V_{[4,5]}(s_1,s_2,s_3,r_1,r_2)}
\\\ket{\bl{s_4}}\ket{\bl{s_5}}\ket{\bl{s_6}}\ket{V_{[4,5]}(s_4,s_5,s_6,r_3,r_4)}
\\\ket{\bl{r_1}}\ket{\bl{r_3}}\ket{\bl{r_5}}\ket{V_{[4,5]}(0,r_1,r_3,r_5,r_6)}
\\\ket{\bl{r_2}}\ket{\bl{V_{[4,5]}^{[3,5]}(r_2,r_7,r_8)}}\ket{V_{[4,5]}^{[3,5]}(r_2,r_7,r_8)}
\\\ket{\bl{r_4}}\ket{\bl{V_{[4,5]}^{[3,5]}(r_4,r_9,r_{10})}}\ket{V_{[4,5]}^{[3,5]}(r_4,r_9,r_{10})}
\\\ket{\bl{r_6}}\ket{\bl{V_{[4,5]}^{[3,5]}(r_6,r_{11},r_{12})}}\ket{V_{[4,5]}^{[3,5]}(r_6,r_{11},r_{12})}
\end{array}\label{eq:disentangle_d_3}
\end{eqnarray}
Using arguments similar to those under \eqref{eq:disentangle_d_4_1}, the state in \eqref{eq:disentangle_d_3} can be simplified as
\begin{eqnarray*}
\sum_{\substack{(r_1,r_2\hdots r_6\\F_{11},f_8\hdots f_{12})\\\in\F_{11}^{12}}}
\begin{array}{l}
\ket{\bl{s_1}}\ket{\bl{s_2}}\ket{\bl{s_3}}\ket{V_{[4,5]}(s_1,s_2,s_3,r_1,r_2)}
\\\ket{\bl{s_4}}\ket{\bl{s_5}}\ket{\bl{s_6}}\ket{V_{[4,5]}(s_4,s_5,s_6,r_3,r_4)}
\\\ket{\bl{r_1}}\ket{\bl{r_3}}\ket{\bl{r_5}}\ket{V_{[4,5]}(0,r_1,r_3,r_5,r_6)}
\\\ket{\bl{r_2}}\ket{\bl{f_7,f_8}}\ket{f_7,f_8}
\\\ket{\bl{r_4}}\ket{\bl{f_9,f_{10}}}\ket{f_9,f_{10}}
\\\ket{\bl{r_6}}\ket{\bl{f_{11},f_{12}}}\ket{f_{11},f_{12}}
\end{array}
\end{eqnarray*}
Rearranging the qudits, we obtain
\begin{eqnarray*}
\ket{\bl{s_1 s_2 s_3 s_4 s_5 s_6}}
\sum_{\substack{(r_1,r_2\hdots r_6\\F_{11},f_8\hdots f_{12})\\\in\F_{11}^{12}}}
\begin{array}{l}
\ket{\bl{r_1}}\ket{\bl{r_2}}\ket{V_{[4,5]}(s_1,s_2,s_3,r_1,r_2)}
\\\ket{\bl{r_3}}\ket{\bl{r_4}}\ket{V_{[4,5]}(s_4,s_5,s_6,r_3,r_4)}
\\\ket{\bl{r_5}}\ket{\bl{r_6}}\ket{V_{[4,5]}(0,r_1,r_3,r_5,r_6)}
\\\ket{\bl{f_7,f_8}}\ket{f_7,f_8}
\\\ket{\bl{f_9,f_{10}}}\ket{f_9,f_{10}}
\\\ket{\bl{f_{11},f_{12}}}\ket{f_{11},f_{12}}
\end{array}
\end{eqnarray*}
5) $\ket{r_1}$ and $\ket{r_3}$ are entangled with $\ket{r_5}\ket{r_6}$ $\ket{V_{[4,5]}(0,r_1,r_3,r_5,r_6)}$. This entanglement will be removed here. Consider the matrix
\begin{equation*}
G_2=\left[
\begin{tabular}{c}
1 0 0 0\\
0 1 0 0\\\hline
$V_{[4,5]}^{[2,5]}$
\end{tabular}
\right]
\end{equation*}
Apply $G_2$ to $\ket{r_1}\ket{r_3}\ket{r_5}\ket{r_6}$ to obtain
\begin{eqnarray*}
&&\ket{\bl{s_1 s_2 s_3 s_4 s_5 s_6}}
\sum_{\substack{(r_1,r_2\hdots r_6\\F_{11},f_8\hdots f_{12})\\\in\F_{11}^{12}}}
\!\!\!\begin{array}{l}
\ket{\bl{r_1}}\ket{\bl{r_2}}\ket{V_{[4,5]}(s_1,s_2,s_3,r_1,r_2)}
\\\ket{\bl{r_3}}\ket{\bl{r_4}}\ket{V_{[4,5]}(s_4,s_5,s_6,r_3,r_4)}
\\\ket{\bl{V_{[4,5]}^{[2,5]}(r_1,r_3,r_5,r_6)}}\\\ \ \ \ \ \ \ket{V_{[4,5]}^{[2,5]}(r_1,r_3,r_5,r_6)}
\\\ket{\bl{f_7,f_8}}\ket{f_7,f_8}
\\\ket{\bl{f_9,f_{10}}}\ket{f_9,f_{10}}
\\\ket{\bl{f_{11},f_{12}}}\ket{f_{11},f_{12}}
\end{array}\nonumber
\\&&=\ket{\bl{s_1 s_2 s_3 s_4 s_5 s_6}}
\!\!\!\sum_{\substack{(r_1,r_2\hdots r_4\\f_5,f_6\hdots f_{12})\\\in\F_{11}^{12}}}
\!\!\!\begin{array}{l}
\ket{\bl{r_1}}\ket{\bl{r_2}}\ket{V_{[4,5]}(s_1,s_2,s_3,r_1,r_2)}
\\\ket{\bl{r_3}}\ket{\bl{r_4}}\ket{V_{[4,5]}(s_4,s_5,s_6,r_3,r_4)}
\\\ket{\bl{f_5,f_6}}\ket{f_5,f_6}
\\\ket{\bl{f_7,f_8}}\ket{f_7,f_8}
\\\ket{\bl{f_9,f_{10}}}\ket{f_9,f_{10}}
\\\ket{\bl{f_{11},f_{12}}}\ket{f_{11},f_{12}}
\end{array}
\end{eqnarray*}
6) $\ket{s_1,s_2,s_3}$ is entangled with $\ket{V_{[4,5]}(s_1,s_2,s_3,r_1,r_2)}\ket{r_1}$ $\ket{r_2}$ and $\ket{s_4,s_5,s_6}$ is entangled with $\ket{V_{[4,5]}(s_4,s_5,s_6,r_3,r_4)}$ $\ket{r_3}\ket{r_4}$. Consider the matrix
\begin{equation*}
G_3=\left[
\begin{tabular}{c}
1 0 0 0 0\\
0 1 0 0 0\\
0 0 1 0 0\\\hline
$V_{[4,5]}$
\end{tabular}
\right]
\end{equation*}
Now, apply $G_3$ to $\ket{s_1 s_2 s_3}\ket{r_1}\ket{r_2}$ and then to $\ket{s_4 s_5 s_6}\ket{r_3}\ket{r_4}$ to obtain
\begin{eqnarray}
\ket{\bl{s_1 s_2 s_3 s_4 s_5 s_6}}
\sum_{\substack{(r_1,r_2\hdots r_4\\f_5,f_6\hdots f_{12})\\\in\F_{11}^{12}}}
\begin{array}{l}
\ket{\bl{V_{[4,5]}(s_1,s_2,s_3,r_1,r_2)}}\\\ \ \ \ \ \ \ \ \ \ \ket{V_{[4,5]}(s_1,s_2,s_3,r_1,r_2)}
\\\ket{\bl{V_{[4,5]}(s_4,s_5,s_6,r_3,r_4)}}\\\ \ \ \ \ \ \ \ \ \ \ket{V_{[4,5]}(s_4,s_5,s_6,r_3,r_4)}
\\\ket{\bl{f_5,f_6}}\ket{f_5,f_6}
\\\ket{\bl{f_7,f_8}}\ket{f_7,f_8}
\\\ket{\bl{f_9,f_{10}}}\ket{f_9,f_{10}}
\\\ket{\bl{f_{11},f_{12}}}\ket{f_{11},f_{12}}
\end{array}\nonumber
\\=\ket{\bl{s_1 s_2 s_3 s_4 s_5 s_6}}
\sum_{\substack{(f_1,f_2\hdots f_{12})\\\in\F_{11}^{12}}}
\begin{array}{l}
\ket{\bl{f_1,f_2}}\ket{f_1,f_2}
\\\ket{\bl{f_3,f_4}}\ket{f_3,f_4}
\\\ket{\bl{f_5,f_6}}\ket{f_5,f_6}
\\\ket{\bl{f_7,f_8}}\ket{f_7,f_8}
\\\ket{\bl{f_9,f_{10}}}\ket{f_9,f_{10}}
\\\ket{\bl{f_{11},f_{12}}}\ket{f_{11},f_{12}}
\end{array}\nonumber
\end{eqnarray}
Here, we have recovered any given basis state in the secret without any entanglement to the other qudits. Hence, the secret can be recovered by the above operations.

\bibliographystyle{plain}

\end{document}